\newtheorem{theorem}{\bf Theorem}
\newtheorem{corollary}{\bf Corollary}
\newtheorem{lemma}{\bf Lemma}
\begin{document}
%
\title{Channel Coding in Random Access Communication over Compound Channels}
%
%
\author{Zheng Wang,
Jie Luo
\thanks{The authors are with the Electrical and Computer Engineering Department, Colorado State University, Fort Collins, CO 80523. E-mail: \{zhwang, rockey\}@engr.colostate.edu. }
\thanks{This work was supported by the National Science Foundation under Grants CCF-1016985 and CNS-1116134. Any opinions, findings, and conclusions or recommendations expressed in this paper are those of the authors and do not necessarily reflect the views of the National Science Foundation.}
}

\maketitle

\begin{abstract}

Due to the short and bursty incoming messages, channel access activities in a wireless random access system are often fractional. The lack of frequent data support consequently makes it difficult for the receiver to estimate and track the time varying channel states with high precision. This paper investigates random multiple access communication over a compound wireless channel where channel realization is known neither at the transmitters nor at the receiver. An achievable rate and error probability tradeoff bound is derived under the non-asymptotic assumption of a finite codeword length. The results are then extended to the random multiple access system where the receiver is only interested in decoding messages from a user subset.
\end{abstract}

\begin{keywords}
channel coding, compound channel, finite codeword length, random access
\end{keywords}

%
\IEEEpeerreviewmaketitle


\section{Introduction}
\label{SectionIntroduction}
In random multiple access communication, users (transmitters) determine their communication rates individually, without sharing the rate information either among each other or with the receiver \cite{ref Luo09}. With the absence of rate coordination among users, reliable message recovery is not always possible \cite{ref Massey85}. The receiver in this case decodes the transmitted messages if a pre-determined error probability requirement can be satisfied, or reports a collision otherwise \cite{ref Luo09}.

Information theoretic channel coding in time-slotted random multiple access communication over a discrete-time memoryless channel was recently investigated in \cite{ref Luo09}\cite{ref WangJournal}. Assume that channel coding is applied only within each time slot (or packet). It was shown in \cite{ref Luo09} that the fundamental performance limitation of the system can be characterized using an achievable rate region in the following sense. Asymptotically as the codeword length (or time slot length) is taken to infinity, the receiver is able to recover the messages reliably if the communication rate vector (which includes the rates of all users) happens to be inside the rate region, and to reliably report a collision if the rate vector happens to be outside the region \cite{ref Luo09}. The achievable rate region was shown to coincide with the Shannon information rate region without a convex hull operation \cite{ref Luo09}. In \cite{ref WangJournal}, the asymptotic coding result was further strengthened to a rate and error probability tradeoff bound under the assumption of a finite codeword length. A bound on the achievable error exponent was obtained consequently \cite{ref WangJournal}.

Both \cite{ref Luo09} and \cite{ref WangJournal} assumed that the channel state information is known at the receiver. Unfortunately, since random access communication deals with bursty short messages, transmission activities of a user are often fractional. Without frequent data support, accurate real-time channel estimation and tracking become difficult at the receiver. Understanding the system performance limitation without channel state information therefore becomes essential \cite{ref Lapidoth98}. In this paper, we illustrate how coding theorems developed in \cite{ref Luo09}\cite{ref WangJournal} can be extended to random multiple access communication over a compound discrete-time memoryless channel \cite{ref Blackwell59}\cite{ref Wolfowitz59}, consisting of a family (set) of channels over which the communication could take place. Both the transmitters and the receiver know about the compound channel set, but not the actual channel realization. As in \cite{ref Luo09}\cite{ref WangJournal}, we assume that time is partitioned into slots of equal length, and we focus on channel coding within one time slot. We define the communication rate of a user as the normalized number of information nats encoded in a packet (or a time slot).

The compound channel communication problem investigated in this paper is different from a conventional one in the following two key aspects. First, in a conventional system, information rates are jointly determined by the transmitters and the receiver \cite{ref Cover05}, while communication rates in a random access system are determined distributively and the rate information is unknown at the receiver \cite{ref Luo09}. Second, in a conventional system, in order to achieve reliable communication, the transmitted rate vector should be supported by all channel realizations in the compound set \cite{ref Csiszar81}\cite{ref Lapidoth98}. In random access communication, however, even though the receiver needs to guarantee the reliability of its decoding output, the receiver also has the additional choice of reporting a collision to avoid confusing the upper layer networking \cite{ref Bertsekas92}. This therefore allows the transmitted rate vector to be supported only by a subset of channel realizations. If the actual channel realization belongs to this subset, the receiver should decode the messages. Otherwise, the receiver should report a collision. Clearly, the decoding and collision report decisions made at the receiver are affected jointly by the communication rates of the users and the actual channel realization.

To address these key aspects in the system model, we assume that the receiver chooses an ``operation region", which is a set of rate vector and channel realization pairs. If the transmitted rate vector and channel realization pair is within the operation region, the receiver {\em intends} to decode the messages, otherwise the receiver {\em intends} to report a collision (or outage). We define the decoding error probability and the collision miss detection probability similarly to \cite{ref WangJournal}, and define the system error probability as the maximum of the two. An upper bound on the achievable system error probability is derived under the assumption of a finite codeword length. We then show how the compound channel results help in obtaining error performance bounds for the random multiple access system where the receiver is only interested in recovering messages from a user {\em subset} \cite{ref Luo09}. This is based on the observation that, conditioned on the receiver not decoding messages for the rest of the users, the impact of their communication activities on the user subset of interest is equivalent to that of a compound channel.

\section{Multiple Random Access Communication over A Compound Channel}
\label{Section MRAC}
Consider a $K$-user time-slotted random access system over a compound discrete-time memoryless channel. Time is slotted with each slot equaling $N$ symbol durations, which is also the length of a packet or a codeword. We assume that channel coding is only applied within each time slot or packet. The compound channel consists of a family of discrete-time memoryless channels, characterized by a set of conditional probabilities $\left\{ P_{Y|X_1,\cdots,X_K}^{(1)}, \cdots, P_{Y|X_1,\cdots,X_K}^{(H)} \right\}$ with cardinality $H$, where, for $k \in \{1,\cdots, K\}$, $X_k \in \mathcal X$ is the channel input symbol of user $k$ with $\mathcal X$ being the finite input alphabet, and $Y \in \mathcal Y$ is the channel output symbol with $\mathcal Y$ being the finite output alphabet. In each time slot, a channel realization is randomly generated from this set and remains static throughout the slot duration. We assume that all users and the receiver know the compound channel set, but not the actual channel realization. For the time being, we will assume that $H<\infty$. The case when the compound channel set contains an infinite number of channels will be discussed at the end of this section.

Assume that at the beginning of a time slot, according to the message availability and the MAC layer protocol, each user, say user $k$ ($k\in\{1, \cdots, K\}$) chooses an arbitrary communication rate $r_k \in \{r_{k1},\cdots, r_{kM}\}$ in nats per symbol, where $\{r_{k1},\cdots, r_{kM}\}$ is a pre-determined finite rate set of user $k$ with cardinality $M$. Neither the other users nor the receiver knows the actual rate realization for each transmission, although they are shared with the rate set information. The user then encodes $\lfloor Nr_k \rfloor$ number of data nats, denoted by a message $w_k$, into a packet (codeword) with $N$ symbols, using a random coding scheme specified as in \cite{ref Luo09}\cite{ref Shamai07} and also in the following. For all $k\in\{1,\cdots,K\}$, we assume that user $k$ is equipped with a codebook library $\mathcal{L}_k = \{\mathcal{C}_{k\theta_k}: \theta_k \in \Theta_k\}$ in which codebooks are indexed by a set $\Theta_k$. Each codebook has $M$ classes of codewords. The $i^{th}$ ($i \in \{1,\cdots,M\}$) codeword class has $\lfloor e^{Nr_{ki}} \rfloor$ codewords with the same length of $N$ symbols. In contrast to a conventional coding scheme, here each codeword in the codebook corresponds to a message and rate pair $(w_k,r_k)$ \cite{ref Luo09}\cite{ref WangJournal}. Let $\mathcal{C}_{k\theta_k}(w_k,r_k)_j$ be the $j^{th}$ symbol of the codeword corresponding to message and rate pair $(w_k,r_k)$ in codebook $\mathcal{C}_{k\theta_k}$. User $k$ first selects the codebook by generating $\theta_k$ according to a distribution $\vartheta_k$ such that the random variables $X_{(w_k,r_k),j}: \theta_k \rightarrow \mathcal{C}_{k\theta_k}(w_k,r_k)_j$ are i.i.d. according to an input distribution $P_{X|r_k}$\footnote{The input distribution is assumed to be a function of the communication rate. In other words, different communication rates may correspond to different input distributions.}. The codebook $\mathcal{C}_{k\theta_k}$ is then used to map $(w_k,r_k)$ into a codeword, denoted by $\mbox{\boldmath$x$}_{(w_k,r_k)}$. After encoding, the codewords of all users are sent to the receiver over the compound channel.

To simplify the notation, we use bold font variable to denote the corresponding variables of all users. For example, $\mbox{\boldmath$w$}$ and $\mbox{\boldmath$r$}$ denote the messages and communication rates of all users. $\mbox{\boldmath$P$}_{\mbox{\scriptsize\boldmath$X$}|\mbox{\scriptsize\boldmath$r$}}$ denote the input distributions of all users, etc. Given a vector variable, say $\mbox{\boldmath$r$}$, we use $r_i$ to denote its element corresponding to user $i$. Let $\mathcal S \subset \{1,\cdots,K\}$ be a user subset, and $\bar{\mathcal{S}}$ be its complement. We use $\mbox{\boldmath$r$}_{\mathcal S}$ to denote the vector that is extracted from $\mbox{\boldmath$r$}$ with only elements corresponding to users in $\mathcal S$. By using the vector notation of the channel input symbols, the compound channel set is also denoted by $\{P^{(1)}_{Y|\mbox{\scriptsize\boldmath$X$}}, \cdots, P^{(H)}_{Y|\mbox{\scriptsize\boldmath$X$}}\}$.

We assume that the receiver is shared with the random codebook generation algorithms and hence knows the randomly generated codebooks of all users. Before packet transmission, the receiver pre-determines an ``operation region'' $\mathcal R =\{(\mbox{\boldmath$r$},P_{Y|\mbox{\scriptsize\boldmath$X$}})\}$, which is a set of rate vector and channel realization pair, where each entry of $\mbox{\boldmath$r$}$ is chosen from the corresponding rate set, i.e., $r_k \in \{r_{k1},\cdots,r_{kM}\}$ ($k\in\{1,\cdots,K\}$), and $P_{Y|\mbox{\scriptsize\boldmath$X$}} \in \left\{ P_{Y|\mbox{\scriptsize\boldmath$X$}}^{(1)}, \cdots, P_{Y|\mbox{\scriptsize\boldmath$X$}}^{(H)} \right\}$. Let $(\mbox{\boldmath$r$},P_{Y|\mbox{\scriptsize\boldmath$X$}})$ be the actual realization of the transmitted rate vector and channel pair. We assume that the receiver {\it intends} to decode all messages if $(\mbox{\boldmath$r$},P_{Y|\mbox{\scriptsize\boldmath$X$}}) \in \mathcal R$, and {\it intends} to report a collision if $(\mbox{\boldmath$r$},P_{Y|\mbox{\scriptsize\boldmath$X$}}) \notin \mathcal R$. Note that the actual rate and channel realization $(\mbox{\boldmath$r$},P_{Y|\mbox{\scriptsize\boldmath$X$}})$ is unknown at the receiver. Therefore the receiver needs to make decisions whether to decode messages or to report a collision only based on the received channel symbols. More specifically, in each time slot, upon receiving the channel output symbols $\mbox{\boldmath$y$}$, the receiver estimates the rate and channel pair, denoted by $(\hat{\mbox{\boldmath$r$}},\hat{P}_{Y|\mbox{\scriptsize\boldmath$X$}})$, for all users. The receiver outputs the corresponding estimated message and rate vector pair $(\hat{\mbox{\boldmath$w$}},\hat{\mbox{\boldmath$r$}})$ if $(\hat{\mbox{\boldmath$r$}},\hat{P}_{Y|\mbox{\scriptsize\boldmath$X$}}) \in \mathcal R$ and a pre-determined decoding error probability requirement is satisfied. Otherwise, the receiver reports a collision. Also note that, whether the receiver should recover the messages or report a collision not only depends on the rates, but also depends on the channel realization. In other words, for the same transmission rate vector, the receiver may be designed to take different actions for different channel realizations. This is opposed to the conventional compound channel communication scenario where, if a rate is supported by the system, the receiver should always decode the messages irrespective of the channel realization.

Given the operation region $\mathcal R$, and conditioned on that $(\mbox{\boldmath$w$},\mbox{\boldmath$r$})$ is transmitted over channel $P_{Y|\mbox{\scriptsize\boldmath$X$}}$, we define the following three error probabilities. The decoding error probability, for $(\mbox{\boldmath$w$},\mbox{\boldmath$r$},P_{Y|\mbox{\scriptsize\boldmath$X$}}) $ with $(\mbox{\boldmath$r$},P_{Y|\mbox{\scriptsize\boldmath$X$}}) \in \mathcal R$, is defined as
\begin{equation}\label{MC-DecodingErrorDef}
P_{e(\mbox{\scriptsize\boldmath$w$},\mbox{\scriptsize\boldmath$r$},P_{Y|\mbox{\tiny\boldmath$X$}})} = Pr \left\{ (\hat{\mbox{\boldmath$w$}},\hat{\mbox{\boldmath$r$}}) \neq (\mbox{\boldmath$w$},\mbox{\boldmath$r$})|(\mbox{\boldmath$w$},\mbox{\boldmath$r$},P_{Y|\mbox{\scriptsize\boldmath$X$}}) \right\}, \quad
\forall (\mbox{\boldmath$w$},\mbox{\boldmath$r$},P_{Y|\mbox{\scriptsize\boldmath$X$}}) , (\mbox{\boldmath$r$},P_{Y|\mbox{\scriptsize\boldmath$X$}}) \in \mathcal R.
\end{equation}
The collision miss detection probability, for $(\mbox{\boldmath$w$},\mbox{\boldmath$r$},P_{Y|\mbox{\scriptsize\boldmath$X$}})$ with $(\mbox{\boldmath$r$},P_{Y|\mbox{\scriptsize\boldmath$X$}}) \notin \mathcal R$, is defined as
\begin{eqnarray}\label{MC-CollisionMissDef}
&& \bar{P}_{c(\mbox{\scriptsize\boldmath$w$},\mbox{\scriptsize\boldmath$r$},P_{Y|\mbox{\tiny\boldmath$X$}})} = 1 - Pr \left\{ \mbox{``collision''}|(\mbox{\boldmath$w$},\mbox{\boldmath$r$},P_{Y|\mbox{\scriptsize\boldmath$X$}}) \right\} - Pr\left\{ (\hat{\mbox{\boldmath$w$}},\hat{\mbox{\boldmath$r$}}) = (\mbox{\boldmath$w$},\mbox{\boldmath$r$})|(\mbox{\boldmath$w$},\mbox{\boldmath$r$},P_{Y|\mbox{\scriptsize\boldmath$X$}}) \right\},\nonumber\\
&& \qquad \qquad \qquad \qquad \qquad \qquad \qquad \qquad \qquad \qquad \qquad \quad \forall (\mbox{\boldmath$w$},\mbox{\boldmath$r$},P_{Y|\mbox{\scriptsize\boldmath$X$}}) , (\mbox{\boldmath$r$},P_{Y|\mbox{\scriptsize\boldmath$X$}}) \notin \mathcal R.
\end{eqnarray}
Note that in (\ref{MC-CollisionMissDef}), when $(\mbox{\boldmath$r$},P_{Y|\mbox{\scriptsize\boldmath$X$}}) \notin \mathcal R$, we have excluded the correct message and rate pair estimation from the collision miss detection event.

Let $\mathcal S \subset \{1,\cdots,K\}$ be an arbitrary user subset. Assume that $\sum_{k\notin\mathcal S} r_k \le I_{(\mbox{\scriptsize\boldmath$r$},P_{Y|\mbox{\tiny\boldmath$X$}})} (\mbox{\boldmath$X$}_{\bar{\mathcal S}};\mbox{\boldmath$Y$}|\mbox{\boldmath$X$}_{\mathcal S})$ for all $ (\mbox{\boldmath$r$},P_{Y|\mbox{\scriptsize\boldmath$X$}}) \in \mathcal R$, where $\mbox{\boldmath$X$}_{\mathcal S}$ denotes the channel input symbols of users in set $\mathcal S$, and $\mbox{\boldmath$X$}_{\bar{\mathcal S}}$ denotes the channel input symbols of users not in set $\mathcal S$. $I_{(\mbox{\scriptsize\boldmath$r$},P_{Y|\mbox{\tiny\boldmath$X$}})}$ is the mutual information function computed using input distribution corresponding to rate vector $\mbox{\boldmath$r$}$ (i.e., $\mbox{\boldmath$P$}_{\mbox{\scriptsize\boldmath$X$}|\mbox{\scriptsize\boldmath$r$}}$) and channel $P_{Y|\mbox{\scriptsize\boldmath$X$}}$. We define the system error probability $P_{es}$ as
\begin{eqnarray}\label{MC-SystemErrorDef}
P_{es} = \max \left\{ \max_{(\mbox{\scriptsize\boldmath$w$},\mbox{\scriptsize\boldmath$r$},P_{Y|\mbox{\tiny\boldmath$X$}}),(\mbox{\scriptsize\boldmath$r$},P_{Y|\mbox{\tiny\boldmath$X$}}) \in \mathcal R} P_{e(\mbox{\scriptsize\boldmath$w$},\mbox{\scriptsize\boldmath$r$},P_{Y|\mbox{\tiny\boldmath$X$}})}, \max_{(\mbox{\scriptsize\boldmath$w$},\mbox{\scriptsize\boldmath$r$},P_{Y|\mbox{\tiny\boldmath$X$}}),(\mbox{\scriptsize\boldmath$r$},P_{Y|\mbox{\tiny\boldmath$X$}})\notin \mathcal R} \bar{P}_{c(\mbox{\scriptsize\boldmath$w$},\mbox{\scriptsize\boldmath$r$},P_{Y|\mbox{\tiny\boldmath$X$}})}\right\}.
\end{eqnarray}

The following theorem gives an upper bound on the achievable system error probability $P_{es}$.

\begin{theorem}\label{TheoremMC}
Consider $K$-user multiple random access communication over a compound discrete-time memoryless channel  $\left\{ P^{(1)}_{Y|\mbox{\scriptsize\boldmath$X$}}, \cdots, P^{(H)}_{Y|\mbox{\scriptsize\boldmath$X$} }\right\}$, where $H<\infty$ is a positive integer. Let $\mbox{\boldmath$P$}_{\mbox{\scriptsize\boldmath$X$}|\mbox{\scriptsize\boldmath$r$}}$ be the input distribution for all users and all rates. Let $\mathcal R$ be the operation region. Assume finite codeword length $N$. There exists a decoding algorithm, whose system error probability $P_{es}$ is upper bounded by,
\begin{eqnarray}\label{SystemErrorTMC}
P_{es}\le \max\left\{ \max_{(\mbox{\scriptsize \boldmath $r$},P_{Y|\mbox{\tiny\boldmath$X$}})\in \mathcal{R} }\sum_{\mathcal{S}\subset\{1, \cdots, K\}} \left[ \begin{array}{l} \sum_{(\tilde{\mbox{\scriptsize \boldmath $r$}},\tilde{P}_{Y|\mbox{\tiny\boldmath$X$}})\in \mathcal{R}, \tilde{\mbox{\scriptsize \boldmath $r$}}_{\mathcal{S}}= \mbox{\scriptsize \boldmath $r$}_{\mathcal{S}} } \exp\{-N E_m(\mathcal{S}, \mbox{\boldmath$r$},\tilde{\mbox{\boldmath $r$}}, P_{Y|\mbox{\scriptsize\boldmath$X$}}, \tilde{P}_{Y|\mbox{\scriptsize\boldmath$X$}} )\}  \\   +  \max_{(\mbox{\scriptsize \boldmath $r$}', P'_{Y|\mbox{\tiny\boldmath$X$}})\not\in \mathcal{R}, \mbox{\scriptsize \boldmath $r$}'_{\mathcal{S}}= \mbox{\scriptsize \boldmath $r$}_{\mathcal{S}} } \exp\{-NE_i(\mathcal{S}, \mbox{\boldmath$r$},\mbox{\boldmath $r$}', P_{Y|\mbox{\scriptsize\boldmath$X$}}, P'_{Y|\mbox{\scriptsize\boldmath$X$}} ) \} \end{array}  \right],\right. \nonumber\\
\left.\max_{(\tilde{\mbox{\scriptsize \boldmath $r$}},\tilde{P}_{Y|\mbox{\tiny\boldmath$X$}})\not\in \mathcal{R}}\sum_{\mathcal{S}\subset\{1, \cdots, K\}} \sum_{(\mbox{\scriptsize \boldmath $r$},P_{Y|\mbox{\tiny\boldmath$X$}})\in \mathcal{R}, \mbox{\scriptsize \boldmath $r$}_{\mathcal{S}}=\tilde{\mbox{\scriptsize \boldmath $r$}}_{\mathcal{S}}} \max_{(\mbox{\scriptsize \boldmath $r$}',P'_{Y|\mbox{\tiny\boldmath$X$}})\not\in \mathcal{R}, \mbox{\scriptsize \boldmath $r$}'_{\mathcal{S}}= \tilde{\mbox{\scriptsize \boldmath $r$}}_{\mathcal{S}} } \exp\{ -N E_i (\mathcal{S}, \mbox{\boldmath$r$},\mbox{\boldmath $r$}', P_{Y|\mbox{\scriptsize\boldmath$X$}}, P'_{Y|\mbox{\scriptsize\boldmath$X$}} ) \}      \right\},
\end{eqnarray}
where $E_m(\mathcal{S}, \mbox{\boldmath$r$},\tilde{\mbox{\boldmath $r$}}, P_{Y|\mbox{\scriptsize\boldmath$X$}}, \tilde{P}_{Y|\mbox{\scriptsize\boldmath$X$}} )$ and $E_i(\mathcal{S}, \mbox{\boldmath$r$},\mbox{\boldmath $r$}', P_{Y|\mbox{\scriptsize\boldmath$X$}}, P'_{Y|\mbox{\scriptsize\boldmath$X$}} )$ are given by
\begin{eqnarray}\label{EmEiMultiMC}
&& E_m(\mathcal{S}, \mbox{\boldmath$r$},\tilde{\mbox{\boldmath $r$}}, P_{Y|\mbox{\scriptsize\boldmath$X$}}, \tilde{P}_{Y|\mbox{\scriptsize\boldmath$X$}} ) = \max_{0<\rho \le 1} -\rho \sum_{k\not\in \mathcal{S}}\tilde{r}_k + \max_{0<s\le 1} -\log \sum_Y \sum_{\mbox{\scriptsize \boldmath $X$}_{\mathcal{S}}} \prod_{k\in \mathcal{S}} P_{X|r_k}(X_k)                    \nonumber \\
&& \quad \times \left(\sum_{\mbox{\scriptsize \boldmath $X$}_{\bar{\mathcal{S}}}}\prod_{k \not\in \mathcal{S}}P_{X|r_k}(X_k)P_{Y|\mbox{\scriptsize\boldmath$X$}}(Y|\mbox{\boldmath $X$})^{1-s}\right)  \left(\sum_{\mbox{\scriptsize \boldmath $X$}_{\bar{\mathcal{S}}}}\prod_{k \not\in \mathcal{S}}P_{X|\tilde{r}_k}(X_k)\tilde{P}_{Y|\mbox{\scriptsize\boldmath$X$}}(Y|\mbox{\boldmath $X$})^{\frac{s}{\rho}} \right)^{\rho},                       \nonumber \\
&& E_i(\mathcal{S}, \mbox{\boldmath$r$},\mbox{\boldmath $r$}', P_{Y|\mbox{\scriptsize\boldmath$X$}}, P'_{Y|\mbox{\scriptsize\boldmath$X$}} ) = \max_{0<\rho \le 1} -\rho \sum_{k\not\in \mathcal{S}}r_k + \max_{0<s \le 1-\rho} - \log \sum_Y \sum_{\mbox{\scriptsize \boldmath $X$}_{\mathcal{S}}} \prod_{k\in \mathcal{S}} P_{X|r_k}(X_k)               \nonumber \\
&& \quad \times \left(\sum_{\mbox{\scriptsize \boldmath $X$}_{\bar{\mathcal{S}}}}\prod_{k \not\in \mathcal{S}}P_{X|r_k}(X_k)P_{Y|\mbox{\scriptsize\boldmath$X$}}(Y|\mbox{\boldmath $X$})^{\frac{s}{s+\rho}} \right)^{s+\rho}\left(\sum_{\mbox{\scriptsize \boldmath $X$}_{\bar{\mathcal{S}}}}\prod_{k \not\in \mathcal{S}}P_{X|r'_k}(X_k)P'_{Y|\mbox{\scriptsize\boldmath$X$}}(Y|\mbox{\boldmath $X$})\right)^{1-s}.
\end{eqnarray}
$\QED$
\end{theorem}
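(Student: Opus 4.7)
The plan is to extend the finite-length random coding argument of \cite{ref WangJournal} by having the decoder jointly estimate the channel realization together with the messages and rates of all users. Concretely, upon observing $\mathbf{y}$ the receiver computes a generalized likelihood metric for every triple $(\mathbf{w}, \mathbf{r}, P_{Y|\mathbf{X}})$ with $\mathbf{r}$ in the product rate set and $P_{Y|\mathbf{X}}$ in the compound set, selects the maximizer $(\hat{\mathbf{w}}, \hat{\mathbf{r}}, \hat{P}_{Y|\mathbf{X}})$, and then outputs $(\hat{\mathbf{w}}, \hat{\mathbf{r}})$ if $(\hat{\mathbf{r}}, \hat{P}_{Y|\mathbf{X}}) \in \mathcal{R}$ or reports a collision otherwise. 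The system error probability will be bounded in expectation over the random codebook, and existence of a codebook attaining the bound follows by the standard random coding argument. Because $H$ and the rate sets are finite, adding $\hat{P}_{Y|\mathbf{X}}$ as a decoded hypothesis costs only a multiplicative constant that is absorbed by the choice of metric (or into the exponent), so the compound channel essentially plays the role of a larger finite hypothesis space.

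Two conditional error events need separate bounds, matching the two lines of (\ref{SystemErrorTMC}). When the true $(\mathbf{r}, P_{Y|\mathbf{X}}) \in \mathcal{R}$, a decoding error occurs whenever some rival triple beats truth. I would first partition this event by the agreement subset $\mathcal{S} \subset \{1, \ldots, K\}$ on which truth and rival share rates (so $\tilde{\mathbf{r}}_{\mathcal{S}} = \mathbf{r}_{\mathcal{S}}$) and messages, then further by whether the rival pair lies in $\mathcal{R}$. For rivals with $(\tilde{\mathbf{r}}, \tilde{P}) \in \mathcal{R}$, a Gallager-style bound on each pairwise error event, with parameter $s \in (0,1]$ on the likelihood ratio and Jensen with $\rho \in (0,1]$ averaging over the codewords of users in $\bar{\mathcal{S}}$, produces $E_m$; the $-\rho \sum_{k \notin \mathcal{S}} \tilde{r}_k$ term absorbs the $\prod_{k \notin \mathcal{S}} \lfloor e^{N\tilde{r}_k} \rfloor$ competing codewords, and the union bound over $(\tilde{\mathbf{r}}, \tilde{P}) \in \mathcal{R}$ yields the sum in (\ref{SystemErrorTMC}). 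For rivals with $(\tilde{\mathbf{r}}, \tilde{P}) \notin \mathcal{R}$, every such winner only triggers the single outcome \emph{collision reported}, so the union bound is replaced by a maximum; the corresponding Gallager manipulation in the range $s \in (0, 1-\rho]$ (which is tighter because the decoder need not distinguish the rival from truth, only avoid picking it as the unique winner) yields $E_i$.

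For the second line of (\ref{SystemErrorTMC}), corresponding to the miss-detection event when the true $(\tilde{\mathbf{r}}, \tilde{P}_{Y|\mathbf{X}}) \notin \mathcal{R}$, the receiver errs by outputting some $(\hat{\mathbf{w}}, \hat{\mathbf{r}})$ with $(\hat{\mathbf{r}}, \hat{P}) \in \mathcal{R}$; the same subset decomposition with $\mathbf{r}_{\mathcal{S}} = \tilde{\mathbf{r}}_{\mathcal{S}}$ applies, the sum over in-$\mathcal{R}$ candidates $(\mathbf{r}, P)$ comes from a union bound on the decoded hypothesis, and the inner $\max$ over additional $(\mathbf{r}', P') \notin \mathcal{R}$ with $\mathbf{r}'_{\mathcal{S}} = \tilde{\mathbf{r}}_{\mathcal{S}}$ plays the same role as the out-of-$\mathcal{R}$ competitors did in the first event, producing $E_i$ again; the outer $\max$ over true $(\tilde{\mathbf{r}}, \tilde{P}) \notin \mathcal{R}$ is simply a worst case over unknown truths. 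The main obstacle I anticipate is the careful combinatorial bookkeeping that justifies the \emph{sum on in-$\mathcal{R}$ competitors, max on out-of-$\mathcal{R}$ competitors} dichotomy cleanly in both events, together with the Gallager algebra producing the precise $E_m$ and $E_i$ expressions with their two distinct parameter ranges for $s$; the machinery of \cite{ref WangJournal} should guide this, with the compound channel entering only as an additional discrete variable handled by the same union-bound/maximization apparatus, which is precisely why the $H < \infty$ assumption is needed and why the infinite case must be treated separately.
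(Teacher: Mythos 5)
Your decoder is not the one the paper constructs, and the difference is not cosmetic: it undermines exactly the two features of the bound you flag as the ``main obstacle,'' namely the $\max$ over out-of-$\mathcal{R}$ competitors and the exponent $E_i$. The paper's decoder never lets out-of-$\mathcal{R}$ hypotheses compete in the likelihood comparison at all. For each subset $\mathcal{S}$ it restricts attention to triples $(\tilde{\mbox{\boldmath$w$}},\tilde{\mbox{\boldmath$r$}},\tilde{P}_{Y|\mbox{\scriptsize\boldmath$X$}})$ with $(\tilde{\mbox{\boldmath$r$}},\tilde{P}_{Y|\mbox{\scriptsize\boldmath$X$}})\in\mathcal{R}$ that pass a per-hypothesis typicality test against a threshold $\tau_{(\tilde{\mbox{\scriptsize\boldmath$r$}},\tilde{P}_{Y|\mbox{\tiny\boldmath$X$}},\mathcal{S})}(\mbox{\boldmath$y$})$, takes the maximum-likelihood survivor, and reports a collision if there is no survivor or the per-$\mathcal{S}$ estimates disagree. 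The error when the truth is in $\mathcal{R}$ then splits into a genuine union bound over in-$\mathcal{R}$ rivals (giving the $E_m$ sum, as you describe) plus the \emph{single} event $P_t$ that the true codeword fails its own threshold. Step IV of the appendix chooses the threshold to equalize the bound on $P_t$ against the bound on $P_i$ (a wrong in-$\mathcal{R}$ codeword passing its threshold when the truth lies outside $\mathcal{R}$); it is this balancing, via the substitution $s_1=1-s_2/\tilde{\rho}$ and the subsequent change of variables, that produces both the $E_i$ exponent with its $(\cdot)^{s+\rho}(\cdot)^{1-s}$ structure and range $0<s\le 1-\rho$, and the $\max$ over out-of-$\mathcal{R}$ pairs --- the threshold is tuned against the worst-case out-of-region hypothesis, a deterministic maximization, not a probabilistic union.

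Your pure argmax decoder instead must confront the event that some out-of-$\mathcal{R}$ codeword outscores the truth, and your justification for bounding that union by a maximum --- that every such winner triggers the single outcome ``collision reported'' --- is not valid: the probability of a union of events is not bounded by the maximum of the individual probabilities merely because the events map to the same decoder output. Handled correctly, your decoder forces a union bound (a sum over out-of-$\mathcal{R}$ rate/channel pairs and over their exponentially many messages), and the resulting pairwise Gallager bound has the $E_m$ form, with the competing channel entering as $P'_{Y|\mbox{\scriptsize\boldmath$X$}}(Y|\mbox{\boldmath$X$})^{s/\rho}$ inside a $\rho$-th power, rather than the $E_i$ form with its restricted range $0<s\le 1-\rho$; your parenthetical explanation for where that range comes from does not correspond to any concrete step. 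So even after completing the algebra you would arrive at a different, generally weaker bound than (\ref{SystemErrorTMC}). The missing idea is the typicality-threshold component of the decoder and its joint optimization against the worst out-of-region hypothesis; the subset decomposition and the $E_m$ portion of your argument are otherwise in line with the paper.
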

The proof of Theorem \ref{TheoremMC} is given in Appendix \ref{AppendixTheoremMC}.

When the compound channel is randomly generated at the beginning but remains static afterwards, one can take codeword length to infinity to obtain the system error exponent as $E_s = \lim_{N\rightarrow\infty} -\frac{1}{N} \log P_{es}$. The following lower bound on the achievable system error exponent $E_{s}$ can be easily derived from Theorem \ref{TheoremMC}.
\begin{corollary}\label{CorollaryMC}
The system error exponent of a $K$-user multiple random access system over compound discrete-time memoryless channels given in Theorem \ref{TheoremMC} is lower-bounded by
\begin{eqnarray}\label{SystemErrorExponentMC}
E_{s} \ge \min\left\{\min_{\mathcal{S}\subset \{1,\cdots,K\}} \min_{ (\mbox{\scriptsize\boldmath$r$},P_{Y|\mbox{\tiny\boldmath$X$}} ), (\tilde{\mbox{\scriptsize\boldmath$r$}},\tilde{P}_{Y|\mbox{\tiny\boldmath$X$}} ) \in \mathcal R, }  E_m(\mathcal{S}, \mbox{\boldmath$r$},\tilde{\mbox{\boldmath $r$}}, P_{Y|\mbox{\scriptsize\boldmath$X$}}, \tilde{P}_{Y|\mbox{\scriptsize\boldmath$X$}} ),\right. \nonumber\\
\left.\min_{\mathcal{S}\subset \{1,\cdots,K\}}  \min_{ (\mbox{\scriptsize\boldmath$r$},P_{Y|\mbox{\tiny\boldmath$X$}} ) \in \mathcal R, (\tilde{\mbox{\scriptsize\boldmath$r$}},\tilde{P}_{Y|\mbox{\tiny\boldmath$X$}} ) \notin \mathcal R, }E_i(\mathcal{S}, \mbox{\boldmath$r$},\tilde{\mbox{\boldmath $r$}}, P_{Y|\mbox{\scriptsize\boldmath$X$}}, \tilde{P}_{Y|\mbox{\scriptsize\boldmath$X$}} ) \right\},
\end{eqnarray}
where $E_m(\mathcal{S}, \mbox{\boldmath$r$},\tilde{\mbox{\boldmath $r$}}, P_{Y|\mbox{\scriptsize\boldmath$X$}}, \tilde{P}_{Y|\mbox{\scriptsize\boldmath$X$}} )$ and $E_i(\mathcal{S}, \mbox{\boldmath$r$},\tilde{\mbox{\boldmath $r$}}, P_{Y|\mbox{\scriptsize\boldmath$X$}}, \tilde{P}_{Y|\mbox{\scriptsize\boldmath$X$}} )$ are given in (\ref{EmEiMultiMC}).
\end{corollary}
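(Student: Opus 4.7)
The plan is to derive the corollary as a direct asymptotic consequence of Theorem~\ref{TheoremMC} by taking $-\tfrac{1}{N}\log$ of both sides of (\ref{SystemErrorTMC}) and passing to the limit $N\to\infty$. The essential observation is that, because $H$, $K$, and $M$ are all finite, every summation, every $\max$, and the outer sum over $\mathcal{S}\subset\{1,\cdots,K\}$ appearing in (\ref{SystemErrorTMC}) ranges over a domain whose cardinality is independent of $N$. Consequently, any finite expression of the form $\sum_i \exp\{-Na_i\}$ or $\max_i \exp\{-Na_i\}$ decays exponentially at rate $\min_i a_i$; the cardinality contributes only a sub-exponential prefactor that vanishes after dividing by $N$ and letting $N\to\infty$.

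Applying this observation systematically, I would first replace the outer $\max\{\cdot,\cdot\}$ in (\ref{SystemErrorTMC}) by an outer $\min$ of the exponents of its two arguments. For the first argument, at fixed $(\mbox{\boldmath$r$},P_{Y|\mbox{\scriptsize\boldmath$X$}})\in\mathcal{R}$ and fixed $\mathcal{S}$, the bracket is a sum of two non-negative terms, each itself a finite sum or $\max$ of exponentials; collapsing each inner finite sum and inner $\max$ to its smallest exponent turns the first bracketed term into $\exp\{-N\, \min_{(\tilde{\mbox{\scriptsize\boldmath$r$}},\tilde{P})\in\mathcal{R},\ \tilde{\mbox{\scriptsize\boldmath$r$}}_\mathcal{S}=\mbox{\scriptsize\boldmath$r$}_\mathcal{S}} E_m\}$ and the second into $\exp\{-N\, \min_{(\mbox{\scriptsize\boldmath$r$}',P')\notin\mathcal{R},\ \mbox{\scriptsize\boldmath$r$}'_\mathcal{S}=\mbox{\scriptsize\boldmath$r$}_\mathcal{S}} E_i\}$. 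The outer $\sum_\mathcal{S}$ and outer $\max_{(\mbox{\scriptsize\boldmath$r$},P)\in\mathcal{R}}$ then each contribute a further outer $\min$.

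The second argument of the outer $\max$ is processed in exactly the same way. Because $E_i(\mathcal{S},\mbox{\boldmath$r$},\mbox{\boldmath$r$}',P_{Y|\mbox{\scriptsize\boldmath$X$}},P'_{Y|\mbox{\scriptsize\boldmath$X$}})$ does not depend on the outermost index $(\tilde{\mbox{\boldmath$r$}},\tilde{P})\notin\mathcal{R}$, the outermost $\min$ over this index is redundant (one may simply take $\tilde{\mbox{\boldmath$r$}}=\mbox{\boldmath$r$}'$, $\tilde{P}=P'$ to fulfill the coupling constraints $\mbox{\boldmath$r$}_\mathcal{S}=\tilde{\mbox{\boldmath$r$}}_\mathcal{S}=\mbox{\boldmath$r$}'_\mathcal{S}$), and the resulting $E_i$ minimization coincides with the one already obtained from the first argument. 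To match the form of (\ref{SystemErrorExponentMC}), I would finally drop the rate-coupling constraints $\tilde{\mbox{\boldmath$r$}}_\mathcal{S}=\mbox{\boldmath$r$}_\mathcal{S}$ and $\mbox{\boldmath$r$}'_\mathcal{S}=\mbox{\boldmath$r$}_\mathcal{S}$ from the inner minimizations; since removing a constraint from a $\min$ can only decrease its value, the resulting expression remains a valid lower bound on $E_s$. What survives is exactly the two-line minimum in (\ref{SystemErrorExponentMC}), after renaming $(\mbox{\boldmath$r$}',P')$ to $(\tilde{\mbox{\boldmath$r$}},\tilde{P})$ in the $E_i$ piece.

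I do not anticipate any genuine obstacle: the argument is entirely bookkeeping. The only conceptual point worth flagging is that dropping the side constraints in the inner minimizations weakens the bound in exactly the right direction for a lower bound on $E_s$, and that the finiteness of all index sets is what guarantees no cardinality factor survives in the exponent in the limit $N\to\infty$.
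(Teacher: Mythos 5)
Your proposal is correct and is precisely the derivation the paper intends: the paper offers no explicit proof of Corollary~\ref{CorollaryMC}, stating only that it is "easily derived" from Theorem~\ref{TheoremMC}, and your argument --- taking $-\frac{1}{N}\log$ of (\ref{SystemErrorTMC}), using the $N$-independence of all index-set cardinalities to collapse sums and maxima of exponentials to the minimum exponent, and relaxing the rate-coupling constraints in the correct (bound-weakening) direction --- is exactly that routine bookkeeping. No gaps.
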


Compared with the error exponent derived in \cite[Corollary 2]{ref WangJournal}, we can see that, even though the channel stays static forever, the system still needs to pay a penalty in error exponent performance for not knowing the channel at the receiver\footnote{We assume that such a conclusion should be well known for the conventional compound channel communication. However, we are not able to find a reference that made such a clear statement.}.

In both Theorem \ref{TheoremMC} and Corollary \ref{CorollaryMC}, we have assumed that there are only a finite number of channels in the compound set. Next, we will extend the result to the case when the cardinality of the compound channel set can be infinity.

We first assume that the the channels in the compound set can be partitioned into $H$ classes, denoted by $\left\{ \mathcal{F}^{(1)},\cdots, \mathcal{F}^{(H)}\right\}$, where $H < \infty$ is a positive integer. For example, if the compound channel set contains fading channels with continuous channel gains, one could quantize the channel gains and define the set of channels with the same quantization outcome as one channel class. We next assume that the receiver should choose an operation region $\mathcal{R}$ to satisfy the following constraint for any rate vector $\mbox{\boldmath $r$}$ and channel class $\mathcal{F}\in \left\{ \mathcal{F}^{(1)},\cdots, \mathcal{F}^{(H)}\right\}$.
\begin{equation}
\mbox{C1: For any } (\mbox{\boldmath $r$}, \mathcal{F}), \mbox{ either } (\mbox{\boldmath $r$}, P_{Y|\mbox{\scriptsize\boldmath$X$}}) \in \mathcal{R} \mbox{ } \forall P_{Y|\mbox{\scriptsize\boldmath$X$}} \in \mathcal{F}, \mbox{ or } (\mbox{\boldmath $r$}, P_{Y|\mbox{\scriptsize\boldmath$X$}}) \not\in \mathcal{R} \mbox{ } \forall P_{Y|\mbox{\scriptsize\boldmath$X$}} \in \mathcal{F}.
\label{ConstraintC1}
\end{equation}
We say $(\mbox{\boldmath $r$}, \mathcal{F})\in \mathcal{R}$ if $(\mbox{\boldmath $r$}, P_{Y|\mbox{\scriptsize\boldmath$X$}}) \in \mathcal{R}$ for all $P_{Y|\mbox{\scriptsize\boldmath$X$}}\in \mathcal{F}$, and we say  $(\mbox{\boldmath $r$}, \mathcal{F})\not\in \mathcal{R}$ otherwise.

For each channel class $\mathcal{F}$ and for each channel output symbol $Y$ and input symbol vector $\mbox{\boldmath$X$}$, we define the following upper and lower bounds on the conditional probability values yielded by channels in $\mathcal{F}$, denoted by $P^{\mathcal{F}}_{\max}(Y|\mbox{\boldmath$X$})$ and $P^{\mathcal{F}}_{\min}(Y|\mbox{\boldmath$X$})$,
\begin{eqnarray}
\label{UpperLowerBounds}
&& P^{\mathcal{F}}_{\max}(Y|\mbox{\boldmath$X$}) = \max_{P_{Y|\mbox{\tiny\boldmath$X$}} \in \mathcal{F}} P_{Y|\mbox{\scriptsize\boldmath$X$}}(Y|\mbox{\boldmath$X$}) , \qquad P^{\mathcal{F}}_{\min}(Y|\mbox{\boldmath$X$}) = \min_{P_{Y|\mbox{\tiny\boldmath$X$}} \in \mathcal{F}} P_{Y|\mbox{\scriptsize\boldmath$X$}}(Y|\mbox{\boldmath$X$}).
\end{eqnarray}
The following theorem gives an upper bound on the achievable system error probability.

\begin{theorem}
\label{TheoremContinuousChannel}
Consider a $K$-user multiple random access communication system over a compound discrete-time memoryless channel. Assume that the compound set is partitioned into $H$ classes, denoted by $\left\{ \mathcal{F}^{(1)},\cdots, \mathcal{F}^{(H)}\right\}$, where $H$ is a finite positive integer. Assume that the operation region $\mathcal{R}$ satisfies constraint C1 given in (\ref{ConstraintC1}). The system error probability $P_{es}$ is upper bounded as follows.
\begin{eqnarray}
\label{PesCC}
 P_{es} & \le & \max \left\{  \max_{(\mbox{\scriptsize\boldmath$r$},\mathcal{F}) \in \mathcal{R}} \sum_{\mathcal{S} \subset \{1,\cdots,K\}} \left[ \max_{(\mbox{\scriptsize\boldmath$r$}', \mathcal{F}') \notin \mathcal{R},\mbox{\scriptsize\boldmath$r$}'_{\mathcal{S}} = \mbox{\scriptsize\boldmath$r$}_{\mathcal{S}}} \exp \left\{ -NE_i(\mathcal{S}, \mbox{\boldmath$r$},\mbox{\boldmath $r$}', \mathcal{F}, \mathcal{F}' ) \right\} \right. \right.\nonumber\\
&& \qquad \left. + \sum_{(\tilde{\mbox{\scriptsize\boldmath$r$}},\tilde{\mathcal{F}})  \in \mathcal{R},\tilde{\mbox{\scriptsize\boldmath$r$}}_{\mathcal{S}} = \mbox{\scriptsize\boldmath$r$}_{\mathcal{S}}} \exp \left\{ -NE_m(\mathcal{S}, \mbox{\boldmath$r$},\tilde{\mbox{\boldmath $r$}}, \mathcal{F}, \tilde{\mathcal{F}} ) \right\} \right] , \nonumber\\
&& \quad \left.\max_{(\tilde{\mbox{\scriptsize\boldmath$r$}},\tilde{\mathcal{F}})  \notin \mathcal{R}} \sum_{\mathcal{S} \subset \{1,\cdots,K\}}
\left[ \sum_{(\mbox{\scriptsize\boldmath$r$},\mathcal{F}) \in \mathcal{R},\mbox{\scriptsize\boldmath$r$}_{\mathcal{S}} = \tilde{\mbox{\scriptsize\boldmath$r$}}_{\mathcal{S}}} \max_{(\mbox{\scriptsize\boldmath$r$}', \mathcal{F}') \notin \mathcal{R},\mbox{\scriptsize\boldmath$r$}'_{\mathcal{S}} = \mbox{\scriptsize\boldmath$r$}_{\mathcal{S}}} \exp \left\{ -NE_i(\mathcal{S}, \mbox{\boldmath$r$},\mbox{\boldmath $r$}', \mathcal{F}, \mathcal{F}' ) \right\} \right] \right\}.
\end{eqnarray}
where $E_m(\mathcal{S}, \mbox{\boldmath$r$},\tilde{\mbox{\boldmath $r$}}, \mathcal{F}, \tilde{\mathcal{F}} )$ and $E_i(\mathcal{S}, \mbox{\boldmath$r$},\mbox{\boldmath $r$}', \mathcal{F}, \mathcal{F}' )$ are given by
\begin{eqnarray}
\label{EmEiCC}
&& E_m(\mathcal{S}, \mbox{\boldmath$r$},\tilde{\mbox{\boldmath $r$}}, \mathcal{F}, \tilde{\mathcal{F}} ) = \max_{0<\rho \le 1} -\rho \sum_{k\not\in \mathcal{S}}\tilde{r}_k + \max_{0<s\le 1} -\log \sum_Y \sum_{\mbox{\scriptsize \boldmath $X$}_{\mathcal{S}}} \prod_{k\in \mathcal{S}} P_{X|r_k}(X_k)                    \nonumber \\
&& \quad \times \left(\sum_{\mbox{\scriptsize \boldmath $X$}_{\bar{\mathcal{S}}}}\prod_{k \not\in \mathcal{S}}P_{X|r_k}(X_k)P^{\mathcal{F}}_{\max}(Y|\mbox{\boldmath $X$})P^{\mathcal{F}}_{\min}(Y|\mbox{\boldmath $X$})^{-s}\right)  \left(\sum_{\mbox{\scriptsize \boldmath $X$}_{\bar{\mathcal{S}}}}\prod_{k \not\in \mathcal{S}}P_{X|\tilde{r}_k}(X_k) P^{\tilde{\mathcal{F}}}_{\max}(Y|\mbox{\boldmath $X$})^{\frac{s}{\rho}} \right)^{\rho}.                       \nonumber \\
&& E_i(\mathcal{S}, \mbox{\boldmath$r$},\mbox{\boldmath $r$}', \mathcal{F}, \mathcal{F}' ) = \max_{0<\rho \le 1} -\rho \sum_{k\not\in \mathcal{S}}r_k + \max_{0<s \le 1-\rho} - \log \sum_Y \sum_{\mbox{\scriptsize \boldmath $X$}_{\mathcal{S}}} \prod_{k\in \mathcal{S}} P_{X|r_k}(X_k)               \nonumber \\
&& \quad \times \left(\sum_{\mbox{\scriptsize \boldmath $X$}_{\bar{\mathcal{S}}}}\prod_{k \not\in \mathcal{S}}P_{X|r_k}(X_k)P^{\mathcal{F}}_{\max}(Y|\mbox{\boldmath $X$})P^{\mathcal{F}}_{\min}(Y|\mbox{\boldmath $X$})^{\frac{-\rho}{s+\rho}} \right)^{s+\rho}\left(\sum_{\mbox{\scriptsize \boldmath $X$}_{\bar{\mathcal{S}}}}\prod_{k \not\in \mathcal{S}}P_{X|r'_k}(X_k)P^{\mathcal{F}'}_{\max}(Y|\mbox{\boldmath $X$})\right)^{1-s}.     \nonumber\\
\end{eqnarray}
$\QED$
\end{theorem}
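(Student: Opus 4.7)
The plan is to reduce the infinite-cardinality compound channel setting to the finite-cardinality one of Theorem \ref{TheoremMC} by exploiting the partition into $H$ classes together with constraint C1. First, I would define a class-based decoder: since C1 guarantees that membership in $\mathcal{R}$ is determined at the class level, the receiver needs only decide $(\hat{\mbox{\boldmath$w$}}, \hat{\mbox{\boldmath$r$}}, \hat{\mathcal{F}})$, with the collision versus decoding dichotomy based on whether $(\hat{\mbox{\boldmath$r$}}, \hat{\mathcal{F}}) \in \mathcal{R}$. This plays exactly the role that $\hat{P}_{Y|\mbox{\scriptsize\boldmath$X$}}$ played in Theorem \ref{TheoremMC}, with the finite list $\mathcal{F}^{(1)}, \ldots, \mathcal{F}^{(H)}$ standing in for the finite channel list there.

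Second, I would repeat the Gallager-type derivation from the proof of Theorem \ref{TheoremMC} (in Appendix \ref{AppendixTheoremMC}) to upper bound both the decoding error and the collision miss detection probabilities, conditioned on any actual channel realization $P_{Y|\mbox{\scriptsize\boldmath$X$}} \in \mathcal{F}$ (where $\mathcal{F}$ is the true class) and any alternative realization in a competing class. This produces the same kind of bound as in Theorem \ref{TheoremMC}, except that the integrand factors still depend on the specific unknown channel realizations within each class.

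Third, I would apply uniform worst-case substitutions to remove the residual per-realization dependence. Since $\rho \in (0,1]$ and $s \in (0,1]$ (or $s \in (0,1-\rho]$ in the $E_i$ term), every Gallager exponent on $P_{Y|\mbox{\scriptsize\boldmath$X$}}$ has a definite sign, so each integrand can be upper bounded by replacing $P_{Y|\mbox{\scriptsize\boldmath$X$}}$ with $P^{\mathcal{F}}_{\max}$ on factors with a positive net exponent and with $P^{\mathcal{F}}_{\min}$ on factors with a negative net exponent. Concretely, for $s \in (0,1]$,
\[
P_{Y|\mbox{\scriptsize\boldmath$X$}}(Y|\mbox{\boldmath$X$})^{1-s} = P_{Y|\mbox{\scriptsize\boldmath$X$}}(Y|\mbox{\boldmath$X$}) \cdot P_{Y|\mbox{\scriptsize\boldmath$X$}}(Y|\mbox{\boldmath$X$})^{-s} \le P^{\mathcal{F}}_{\max}(Y|\mbox{\boldmath$X$}) \cdot P^{\mathcal{F}}_{\min}(Y|\mbox{\boldmath$X$})^{-s},
\]
and $\tilde{P}_{Y|\mbox{\scriptsize\boldmath$X$}}(Y|\mbox{\boldmath$X$})^{s/\rho} \le P^{\tilde{\mathcal{F}}}_{\max}(Y|\mbox{\boldmath$X$})^{s/\rho}$; analogous manipulations produce $P_{Y|\mbox{\scriptsize\boldmath$X$}}^{s/(s+\rho)} \le P^{\mathcal{F}}_{\max} \cdot P^{\mathcal{F}}_{\min}^{-\rho/(s+\rho)}$ and $P'_{Y|\mbox{\scriptsize\boldmath$X$}} \le P^{\mathcal{F}'}_{\max}$. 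These are precisely the factors appearing in (\ref{EmEiCC}).

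Finally, because the resulting bounds no longer depend on any specific realization within a class, they bound the two error probabilities regardless of which $P_{Y|\mbox{\scriptsize\boldmath$X$}} \in \mathcal{F}$ is actually in force, and the outer combinatorial structure — the summations over user subsets $\mathcal{S}$ and the maxima over competing rate and class pairs — is inherited directly from Theorem \ref{TheoremMC} with classes replacing individual channels, which yields (\ref{PesCC}). The main obstacle I expect is the careful sign-tracking of the Gallager exponents to certify the direction of each $P^{\mathcal{F}}_{\max}$ vs.\ $P^{\mathcal{F}}_{\min}$ substitution, together with confirming that the class-based decoder still attains a Theorem \ref{TheoremMC}-type bound against the worst-case realization within each class; once that reduction is in place, the original infinite cardinality of the compound set plays no further role because only the $H$ classes remain.
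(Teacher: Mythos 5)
Your overall route is the same as the paper's: reduce the infinite compound set to the $H$ classes, rerun the Gallager machinery of Theorem \ref{TheoremMC}, and use the pointwise bounds $P_{Y|\mbox{\scriptsize\boldmath$X$}} \le P^{\mathcal{F}}_{\max}$ and $P_{Y|\mbox{\scriptsize\boldmath$X$}} \ge P^{\mathcal{F}}_{\min}$ with the sign of each exponent to arrive at (\ref{EmEiCC}); your factorizations $P_{Y|\mbox{\scriptsize\boldmath$X$}}^{1-s} \le P^{\mathcal{F}}_{\max}(P^{\mathcal{F}}_{\min})^{-s}$ and $P_{Y|\mbox{\scriptsize\boldmath$X$}}^{s/(s+\rho)} \le P^{\mathcal{F}}_{\max}(P^{\mathcal{F}}_{\min})^{-\rho/(s+\rho)}$ are exactly the ones that produce the stated exponents.

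There is, however, one concrete gap: you apply the $\min/\max$ substitutions only in the \emph{analysis}, to ``remove the residual per-realization dependence'' from bounds on the error probabilities of a decoder whose comparisons are (in your Step 2) conditioned on the actual channel realization and on alternative realizations inside competing classes. A decoder whose acceptance rule is stated in terms of the true likelihoods $Pr\{\mbox{\boldmath$y$}|\mbox{\boldmath$x$},P_{Y|\mbox{\scriptsize\boldmath$X$}}\}$ is not realizable --- the receiver knows neither the realization in its own hypothesized class nor which of the (possibly uncountably many) realizations in a competing class to test --- so bounding its error probability does not establish the achievability claimed by the theorem. The paper resolves this by building the class statistics into the decoding rule itself (see (\ref{RevisedCriterionCC})): the hypothesized codeword is scored with the pessimistic metric $P^{\mathcal{F}}_{\min}$ (in both the threshold test and the pairwise comparisons), competitors are scored with the optimistic metric $P^{\tilde{\mathcal{F}}}_{\max}$, and the typicality threshold $\tau_{(\mbox{\scriptsize\boldmath$r$},P^{\mathcal{F}}_{\min},\mathcal{S})}$ is likewise indexed by computable quantities. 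With that decoder, the only object in the error analysis that still depends on the unknown realization is the single factor $P(\mbox{\boldmath$y$}|\mbox{\boldmath$x$},P_{Y|\mbox{\scriptsize\boldmath$X$}})$ coming from the channel generating $\mbox{\boldmath$y$}$, which is then bounded by $P^{\mathcal{F}}_{\max}$; the union over competitors is automatically over the $H$ classes rather than over individual channels. Your algebra is recoverable once you restate the decoder this way, so the fix is to move the $\min/\max$ choice from the analysis into the definition of the decoding metric and verify that the direction of each choice (pessimistic for the hypothesis, optimistic for the competitors) still yields valid upper bounds on $P_t$, $P_m$, and $P_i$ --- which is precisely what the comparison of (\ref{ProofPtMC})--(\ref{ProofPiMC}) with (\ref{CCBound4P_m})--(\ref{CCBound4P_i}) checks.
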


The proof of Theorem \ref{TheoremContinuousChannel} is given in Appendix \ref{AppendixTheoremContinuousChannel}. As shown in the proof that, in order to make decoding and collision report decisions, the receiver only needs to search over the finite number of channel classes using statistics $P^\mathcal{F}_{\max}$ and $P^\mathcal{F}_{\min}$ defined in (\ref{UpperLowerBounds}), as opposed to searching among all possible channels.

\label{SystemErrorExponentCC}

\section{Individual User Decoding in Random Multiple Access Communication}
\label{Section IndividualDecoding}
In Section \ref{Section MRAC}, we have assumed that the receiver either decodes messages or reports collisions for {\em all} users in the system. In practical applications, even though many users compete for the wireless channel, it is common that the receiver may not be interested in recovering messages for all of them. In this section, we show that the results obtained in Section \ref{Section MRAC} can help to derive error probability bounds in a random multiple access system where the receiver is only interested in recovering the messages from a user subset. However, to simplify the notations, we will only consider a special case when the communication channel is known at the receiver, and when the receiver is only interested in decoding for a single user. Generalizing the results to decoding for multiple users over a compound channel is straightforward.

Let the discrete-time memoryless channel be characterized by $P_{Y|\mbox{\scriptsize\boldmath$X$}}$, which is known at the receiver. In each time slot, each user chooses a communication rate and encodes its message using the random coding scheme described in Section \ref{Section MRAC}. The rate information is shared neither among the users nor with the receiver. We assume that the receiver is only interested in recovering the message for user $k\in \{1, \cdots, K\}$. We assume that the receiver chooses an operation region $\mathcal{R}$, such that if the transmitted rate vector $\mbox{\boldmath $r$}$ satisfies $\mbox{\boldmath $r$}\in\mathcal{R}$, the receiver intends to decode for user $k$, and if $\mbox{\boldmath $r$}\not\in\mathcal{R}$, the receiver intends to report a collision for user $k$. It is important to note that, first, whether the receiver will be able to decode the message of user $k$, not only depends on the rate of user $k$, but also depends on the rate of other users. Therefore, the operation rate region $\mathcal{R}$ should still be defined as a set of rate vector $\mbox{\boldmath $r$}$, as opposed to the rate of user $k$. Second, even though the receiver only cares about the message of user $k$, the receiver still has the option of decoding the messages for some other users if this helps to improve the communication performance of user $k$. This implies that, based upon the received symbols, the receiver will essentially need to make a decision on which subset of the messages should be decoded.

Due to the above understandings, we first define an elementary decoder, called the ``$(\mathcal{D}, \mathcal{R}_{\mathcal{D}})$-decoder". Given a user subset $\mathcal{D}\subseteq \{1, \cdots, K\}$ and an operation rate region $\mathcal{R}_{\mathcal{D}}$, the ``$(\mathcal{D}, \mathcal{R}_{\mathcal{D}})$-decoder" intends to recover messages for users in $\mathcal{D}$ while regarding signals from users not in $\mathcal{D}$ as interference, if the communication rate vector is within the operation region $\mathcal{R}_{\mathcal{D}}$. We define the following error probabilities for a $(\mathcal{D}, \mathcal{R}_{\mathcal{D}})$-decoder. Conditioned on users in $\mathcal{D}$ transmitting $(\mbox{\boldmath $w$}_{\mathcal{D}}, \mbox{\boldmath $r$}_{\mathcal{D}})$ and users not in $\mathcal{D}$ choosing rate $\mbox{\boldmath $r$}_{\bar{\mathcal{D}}}$, let us denote the estimated messages and rates by $(\hat{\mbox{\boldmath $w$}}_{\mathcal{D}}, \hat{\mbox{\boldmath $r$}}_{\mathcal{D}})$ and $\hat{\mbox{\boldmath $r$}}_{\bar{\mathcal{D}}}$ if the decoder does not report a collision. We define the decoding error probability of the $(\mathcal{D}, \mathcal{R}_{\mathcal{D}})$-decoder for $(\mbox{\boldmath $w$}_{\mathcal{D}}, \mbox{\boldmath $r$}_{\mathcal{D}}, \mbox{\boldmath $r$}_{\bar{\mathcal{D}}})$ with $\mbox{\boldmath $r$} \in \mathcal{R}_{\mathcal{D}}$ as
\begin{eqnarray}
\label{M-DecodeErrorDef}
P_e(\mbox{\boldmath $w$}_{\mathcal{D}}, \mbox{\boldmath $r$}_{\mathcal{D}}, \mbox{\boldmath $r$}_{\bar{\mathcal{D}}})= Pr\{(\hat{\mbox{\boldmath $w$}}_{\mathcal{D}}, \hat{\mbox{\boldmath $r$}}_{\mathcal{D}}) \ne (\mbox{\boldmath $w$}_{\mathcal{D}}, \mbox{\boldmath $r$}_{\mathcal{D}})|(\mbox{\boldmath $w$}_{\mathcal{D}}, \mbox{\boldmath $r$}_{\mathcal{D}}, \mbox{\boldmath $r$}_{\bar{\mathcal{D}}})\}, \quad \forall (\mbox{\boldmath $w$}_{\mathcal{D}}, \mbox{\boldmath $r$}_{\mathcal{D}}, \mbox{\boldmath $r$}_{\bar{\mathcal{D}}}), \mbox{\boldmath $r$}\in \mathcal{R}_{\mathcal{D}}.
\end{eqnarray}
We define the collision miss detection probability for $(\mbox{\boldmath $w$}_{\mathcal{D}}, \mbox{\boldmath $r$}_{\mathcal{D}}, \mbox{\boldmath $r$}_{\bar{\mathcal{D}}})$ with $\mbox{\boldmath $r$} \not\in \mathcal{R}_{\mathcal{D}}$ as
\begin{eqnarray}
\label{M-CollisionErrorDef}
&& \bar{P}_c(\mbox{\boldmath $w$}_{\mathcal{D}}, \mbox{\boldmath $r$}_{\mathcal{D}}, \mbox{\boldmath $r$}_{\bar{\mathcal{D}}}) = 1 - Pr\{ \mbox{``collision"}|(\mbox{\boldmath $w$}_{\mathcal{D}}, \mbox{\boldmath $r$}_{\mathcal{D}}, \mbox{\boldmath $r$}_{\bar{\mathcal{D}}})\}- Pr\{(\hat{\mbox{\boldmath $w$}}_{\mathcal{D}}, \hat{\mbox{\boldmath $r$}}_{\mathcal{D}}) = (\mbox{\boldmath $w$}_{\mathcal{D}}, \mbox{\boldmath $r$}_{\mathcal{D}}) |(\mbox{\boldmath $w$}_{\mathcal{D}}, \mbox{\boldmath $r$}_{\mathcal{D}}, \mbox{\boldmath $r$}_{\bar{\mathcal{D}}})\}, \nonumber\\
&& \qquad \qquad\qquad \qquad\qquad \qquad\qquad \qquad\qquad \qquad\qquad \qquad\qquad \qquad \forall (\mbox{\boldmath $w$}_{\mathcal{D}}, \mbox{\boldmath $r$}_{\mathcal{D}}, \mbox{\boldmath $r$}_{\bar{\mathcal{D}}}), \mbox{\boldmath $r$}\not\in \mathcal{R}_{\mathcal{D}}.
\end{eqnarray}
System error probability of the $(\mathcal{D}, \mathcal{R}_{\mathcal{D}})$-decoder is defined by
\begin{eqnarray}
&& P_{es}(\mathcal{D}, \mathcal{R}_{\mathcal{D}})=\max\left\{\max_{(\mbox{\scriptsize \boldmath $w$}_{\mathcal{D}}, \mbox{\scriptsize \boldmath $r$}_{\mathcal{D}}, \mbox{\scriptsize \boldmath $r$}_{\bar{\mathcal{D}}}), \mbox{\scriptsize \boldmath $r$}\in \mathcal{R}_{\mathcal{D}} } P_e(\mbox{\boldmath $w$}_{\mathcal{D}}, \mbox{\boldmath $r$}_{\mathcal{D}}, \mbox{\boldmath $r$}_{\bar{\mathcal{D}}}), \max_{(\mbox{\scriptsize \boldmath $w$}_{\mathcal{D}}, \mbox{\scriptsize \boldmath $r$}_{\mathcal{D}}, \mbox{\scriptsize \boldmath $r$}_{\bar{\mathcal{D}}}), \mbox{\scriptsize \boldmath $r$} \not\in \mathcal{R}_{\mathcal{D}} } \bar{P}_c(\mbox{\boldmath $w$}_{\mathcal{D}}, \mbox{\boldmath $r$}_{\mathcal{D}}, \mbox{\boldmath $r$}_{\bar{\mathcal{D}}}) \right\}.
\end{eqnarray}

Given a finite codeword length $N$, the following lemma gives an upper bound on the achievable system error probability of a $(\mathcal{D}, \mathcal{R}_{\mathcal{D}})$-decoder.

\begin{lemma}{\label{Lemma1}}
The following system error probability bound is achievable for a $K$-user random multiple access communication system over a discrete-time memoryless channel $P_{Y|\mbox{\scriptsize\boldmath$X$}}$ with an $(\mathcal{D}, \mathcal{R}_{\mathcal{D}})$-decoder,
\begin{eqnarray}
\label{BoundLemma1}
P_{es}(\mathcal{D}, \mathcal{R}_{\mathcal{D}}) \le \max\left\{ \max_{\mbox{\scriptsize \boldmath $r$}\in \mathcal{R}_{\mathcal{D}} }\sum_{\mathcal{S}\subset \mathcal{D}} \left[ \sum_{\tiny \begin{array}{c} \tilde{\mbox{\scriptsize \boldmath $r$}}\in \mathcal{R}_{\mathcal{D}}, \\ \tilde{\mbox{\scriptsize \boldmath $r$}}_{\mathcal{S}}= \mbox{\scriptsize \boldmath $r$}_{\mathcal{S}} \end{array}} \exp\{-N E_{m\mathcal{D}}(\mathcal{S}, \mbox{\boldmath $r$}, \tilde{\mbox{\boldmath $r$}})\}  +  \max_{\tiny \begin{array}{c} \mbox{\scriptsize \boldmath $r$}'\not\in \mathcal{R}_{\mathcal{D}}, \\ \mbox{\scriptsize \boldmath $r$}'_{\mathcal{S}}= \mbox{\scriptsize \boldmath $r$}_{\mathcal{S}} \end{array} } \exp\{-NE_{i\mathcal{D}}(\mathcal{S}, \mbox{\boldmath $r$}, \mbox{\boldmath $r$}') \} \right] \right.,  \nonumber \\
\left.\max_{\tilde{\mbox{\scriptsize \boldmath $r$}}\not\in \mathcal{R}_{\mathcal{D}} }\sum_{\mathcal{S}\subset \mathcal{D}} \sum_{\tiny \begin{array}{c} \mbox{\scriptsize \boldmath $r$}\in \mathcal{R}_{\mathcal{D}},  \\ \mbox{\scriptsize \boldmath $r$}_{\mathcal{S}}=\tilde{\mbox{\scriptsize \boldmath $r$}}_{\mathcal{S}}\end{array} } \max_{\tiny \begin{array}{c} \mbox{\scriptsize \boldmath $r$}'\not\in \mathcal{R}_{\mathcal{D}}, \\ \mbox{\scriptsize \boldmath $r$}'_{\mathcal{S}}= \tilde{\mbox{\scriptsize \boldmath $r$}}_{\mathcal{S}} \end{array} } \exp\{ -N E_{i\mathcal{D}}(\mathcal{S}, \mbox{\boldmath $r$}, \mbox{\boldmath $r$}') \}     \right\}, \nonumber \\
 \label{SRSBound}
\end{eqnarray}
where $E_{m\mathcal{D}}(\mathcal{S}, \mbox{\boldmath $r$}, \tilde{\mbox{\boldmath $r$}})$ and $E_{i\mathcal{D}}(\mathcal{S}, \mbox{\boldmath $r$}, \mbox{\boldmath $r$}')$ are given by,
\begin{eqnarray}
\label{EmEiDDcoder}
&& E_{m\mathcal{D}}(\mathcal{S}, \mbox{\boldmath $r$}, \tilde{\mbox{\boldmath $r$}})= \max_{0<\rho \le 1} -\rho \sum_{k\in \mathcal{D}\setminus \mathcal{S}}\tilde{r}_k \nonumber  + \max_{0<s\le 1} -\log \sum_Y \sum_{\mbox{\scriptsize \boldmath $X$}_{\mathcal{S}}} \prod_{k\in \mathcal{S}} P_{X|r_k}(X_k)                    \nonumber \\
&& \quad \times \left(\sum_{\mbox{\scriptsize \boldmath $X$}_{\mathcal{D}\setminus \mathcal{S}}}\prod_{k \in \mathcal{D}\setminus \mathcal{S}}P_{X|r_k}(X_k)P(Y|\mbox{\boldmath $X$}_{\mathcal{D}}, \mbox{\boldmath $r$}_{\bar{\mathcal{D}}} )^{1-s}\right) \left(\sum_{\mbox{\scriptsize \boldmath $X$}_{\mathcal{D}\setminus \mathcal{S}}}\prod_{k \in \mathcal{D}\setminus \mathcal{S}}P_{X|\tilde{r}_k}(X_k)P(Y|\mbox{\boldmath $X$}_{\mathcal{D}}, \tilde{\mbox{\boldmath $r$}}_{\bar{\mathcal{D}}} )^{\frac{s}{\rho}} \right)^{\rho},                       \nonumber \\
&& E_{i\mathcal{D}}(\mathcal{S}, \mbox{\boldmath $r$}, \mbox{\boldmath $r$}') = \max_{0<\rho \le 1} -\rho \sum_{k\in \mathcal{D}\setminus \mathcal{S}}r_k  + \max_{0<s \le 1-\rho} - \log \sum_Y \sum_{\mbox{\scriptsize \boldmath $X$}_{\mathcal{S}}} \prod_{k\in \mathcal{S}} P_{X|r_k}(X_k)               \nonumber \\
&& \times \left(\sum_{\mbox{\scriptsize \boldmath $X$}_{\mathcal{D}\setminus \mathcal{S}}}\prod_{k \in \mathcal{D}\setminus \mathcal{S}}P_{X|r_k}(X_k)P(Y|\mbox{\boldmath $X$}_{\mathcal{D}}, \mbox{\boldmath $r$}_{\bar{\mathcal{D}}}  )^{\frac{s}{s+\rho}} \right)^{s+\rho}  \left(\sum_{\mbox{\scriptsize \boldmath $X$}_{\mathcal{D}\setminus \mathcal{S}}}\prod_{k \in \mathcal{D}\setminus \mathcal{S}}P_{X|r'_k}(X_k)P(Y|\mbox{\boldmath $X$}_{\mathcal{D}}, \mbox{\boldmath $r$}'_{\bar{\mathcal{D}}})\right)^{1-s},
\end{eqnarray}
with $P(Y|\mbox{\boldmath $X$}_{\mathcal{D}}, \mbox{\boldmath $r$}_{\bar{\mathcal{D}}} )$ in the above equations defined as
\begin{equation}
\label{EquivalentCompound}
P(Y|\mbox{\boldmath $X$}_{\mathcal{D}}, \mbox{\boldmath $r$}_{\bar{\mathcal{D}}} )= \sum_{\mbox{\scriptsize \boldmath $X$}_{\bar{\mathcal{D}}}}\prod_{k \in \bar{\mathcal{D}}}P_{X|r_k}(X_k)P_{Y|\mbox{\scriptsize\boldmath$X$}}(Y|\mbox{\boldmath $X$}).
\end{equation}
$\QED$
\end{lemma}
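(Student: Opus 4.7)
The strategy is to reduce the $(\mathcal{D}, \mathcal{R}_{\mathcal{D}})$-decoder problem to the compound-channel multiple access problem already solved in Theorem \ref{TheoremMC}, exploiting the observation stated at the start of Section \ref{Section IndividualDecoding}: from the perspective of the users in $\mathcal{D}$, the aggregate effect of the undecoded users in $\bar{\mathcal{D}}$ is indistinguishable from that of a compound channel whose unknown ``state'' is the rate vector $\mbox{\boldmath $r$}_{\bar{\mathcal{D}}}$.

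First I would construct this equivalent compound channel. For each realization $\mbox{\boldmath $r$}_{\bar{\mathcal{D}}}$, I average out the inputs of the $\bar{\mathcal{D}}$ users using their i.i.d.\ input distributions $P_{X|r_k}$, obtaining the effective single-letter law $P(Y|\mbox{\boldmath $X$}_{\mathcal{D}}, \mbox{\boldmath $r$}_{\bar{\mathcal{D}}})$ of (\ref{EquivalentCompound}). Since each rate $r_k$ for $k\in \bar{\mathcal{D}}$ is drawn from a finite pre-determined set of cardinality $M$, the resulting family of effective channels is a finite compound set of size at most $M^{|\bar{\mathcal{D}}|}$, so the $H<\infty$ hypothesis of Theorem \ref{TheoremMC} is met. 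Because codewords of users in $\bar{\mathcal{D}}$ are generated with i.i.d.\ symbols and the physical channel is memoryless, the induced channel from $\mbox{\boldmath $X$}_{\mathcal{D}}$ to $\mbox{\boldmath $Y$}$ is itself discrete-time memoryless with the single-letter law above.

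Next I would translate the operation region by identifying each full rate vector $\mbox{\boldmath $r$}=(\mbox{\boldmath $r$}_{\mathcal{D}}, \mbox{\boldmath $r$}_{\bar{\mathcal{D}}})$ with the reduced pair $(\mbox{\boldmath $r$}_{\mathcal{D}}, P(\cdot|\cdot, \mbox{\boldmath $r$}_{\bar{\mathcal{D}}}))$, declaring that reduced pair to lie in the reduced operation region iff $\mbox{\boldmath $r$}\in \mathcal{R}_{\mathcal{D}}$. Under this bijection, the error and collision events in (\ref{M-DecodeErrorDef})--(\ref{M-CollisionErrorDef}) coincide with those in (\ref{MC-DecodingErrorDef})--(\ref{MC-CollisionMissDef}) for a $|\mathcal{D}|$-user compound channel system. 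Invoking Theorem \ref{TheoremMC} and substituting $P_{Y|\mbox{\boldmath $X$}}\to P(\cdot|\cdot, \mbox{\boldmath $r$}_{\bar{\mathcal{D}}})$, $\tilde{P}_{Y|\mbox{\boldmath $X$}}\to P(\cdot|\cdot, \tilde{\mbox{\boldmath $r$}}_{\bar{\mathcal{D}}})$, $P'_{Y|\mbox{\boldmath $X$}}\to P(\cdot|\cdot, \mbox{\boldmath $r$}'_{\bar{\mathcal{D}}})$ and $\{1,\ldots,K\}\to \mathcal{D}$ in (\ref{EmEiMultiMC}) converts sums over $\mbox{\boldmath $X$}_{\bar{\mathcal{S}}}$ into sums over $\mbox{\boldmath $X$}_{\mathcal{D}\setminus \mathcal{S}}$ and reproduces exactly the $E_{m\mathcal{D}}$ and $E_{i\mathcal{D}}$ expressions of (\ref{EmEiDDcoder}). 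The bound (\ref{BoundLemma1}) is then the specialization of (\ref{SystemErrorTMC}) under this substitution.

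The main obstacle I anticipate is purely bookkeeping: verifying that the alternative-pair maximizations $(\mbox{\boldmath $r$}', P'_{Y|\mbox{\boldmath $X$}})\notin \mathcal{R}$ and $(\tilde{\mbox{\boldmath $r$}}, \tilde{P}_{Y|\mbox{\boldmath $X$}})\notin \mathcal{R}$ in (\ref{SystemErrorTMC}) become, under the bijection, maximizations over full rate vectors $\mbox{\boldmath $r$}', \tilde{\mbox{\boldmath $r$}}\notin \mathcal{R}_{\mathcal{D}}$ with the alternative ``channel realization'' absorbed into the $\bar{\mathcal{D}}$-component of the rate vector, and that the constraint $\mbox{\boldmath $r$}'_{\mathcal{S}}=\mbox{\boldmath $r$}_{\mathcal{S}}$ on the subset $\mathcal{S}\subset \mathcal{D}$ carries through consistently. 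One also has to note that the implicit mutual-information side condition of Theorem \ref{TheoremMC} now involves only rates of users in $\mathcal{D}\setminus \mathcal{S}$, since the $\bar{\mathcal{D}}$ users are part of the channel and no longer contribute rate terms on the left-hand side. Once this correspondence is pinned down, no new information-theoretic estimates are required and the lemma follows as an immediate specialization of Theorem \ref{TheoremMC}.
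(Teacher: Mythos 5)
Your proposal is correct and follows essentially the same route as the paper: the paper's proof is precisely the reduction of the $(\mathcal{D},\mathcal{R}_{\mathcal{D}})$-decoder to a $|\mathcal{D}|$-user random access system over the finite compound set $\{P(Y|\mbox{\boldmath $X$}_{\mathcal{D}},\mbox{\boldmath $r$}_{\bar{\mathcal{D}}})\}$ indexed by the interfering users' rates, followed by a direct invocation of Theorem \ref{TheoremMC}. You simply spell out the bookkeeping (finiteness of the induced compound set, memorylessness of the averaged channel, and the identification of the operation regions) that the paper leaves implicit.
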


\begin{proof}
Since the decoder regards signals from users not in $\mathcal{D}$ as interference, given that users not in $\mathcal{D}$ choose rate $\mbox{\boldmath $r$}_{\bar{\mathcal{D}}}$, the multiple access channel experienced by users in $\mathcal{D}$ is characterized by $P(Y|\mbox{\boldmath $X$}_{\mathcal{D}}, \mbox{\boldmath $r$}_{\bar{\mathcal{D}}})$ as specified in (\ref{EquivalentCompound}). The system can therefore be regarded as a random multiple access system with $|\mathcal{D}|$ users communicating over a compound channel characterized by the set $\{P(Y|\mbox{\boldmath $X$}_{\mathcal{D}}, \mbox{\boldmath $r$}_{\bar{\mathcal{D}}} )| \forall \mbox{\boldmath $r$}_{\bar{\mathcal{D}}} \}$. Consequently, Lemma \ref{Lemma1} is implied directly by Theorem \ref{TheoremMC}.
\end{proof}

Next, we will come back to the system where the receiver is only interested in the message of user $k$. We assume that for each user subset $\mathcal{D}\subseteq \{1, \cdots, K\}$ with $k\in \mathcal{D}$, the receiver assigns an operation region $\mathcal{R}_{\mathcal{D}} \subseteq\mathcal{R}$ for the $(\mathcal{D}, \mathcal{R}_{\mathcal{D}})$-decoder. That is, if the transmission rate $\mbox{\boldmath $r$}$ satisfies $\mbox{\boldmath $r$}\in \mathcal{R}_{\mathcal{D}}$, the receiver intends to use the $(\mathcal{D}, \mathcal{R}_{\mathcal{D}})$-decoder to recover the message of user $k$. It is easy to see that we should have,
\begin{equation}
\mathcal{R}=\bigcup_{\mathcal{D}: \mathcal{D}\subseteq\{1, \cdots, K\}, k\in \mathcal{D}} \mathcal{R}_{\mathcal{D}}.
\label{RegionAssignment}
\end{equation}
Assume that the receiver (single-user decoder) carries out all the $(\mathcal{D}, \mathcal{R}_{\mathcal{D}})$-decoding operations. The receiver outputs an estimated message $\hat{w}_k$ for user $k$ if at least one  $(\mathcal{D}, \mathcal{R}_{\mathcal{D}})$-decoder outputs an estimated message, and all estimation outputs of the $(\mathcal{D},\mathcal{R}_{\mathcal{D}})$-decoders for user $k$ are identical. Otherwise, the receiver reports a collision for user $k$.

Let the transmitted rate vector be $\mbox{\boldmath$r$}$, and the transmitted message of user $k$ be $w_k$. We define the decoding error probability $P_{e}(w_k,\mbox{\boldmath$r$})$, the collision miss detection probability $\bar{P}_{c}(w_k,\mbox{\boldmath$r$})$ and the system error probability $P_{es}$ as follows,
\begin{eqnarray}\label{SingleDecodingProbDef}
&&P_{e}(w_k,\mbox{\boldmath$r$}) = Pr \left\{(\hat{w}_k,\hat{r}_k) \neq (w_k,r_k) | (w_k,\mbox{\boldmath$r$}) \right\}, \forall (w_k,\mbox{\boldmath$r$}), \mbox{\boldmath$r$} \in \mathcal{R},\nonumber\\
&&\bar{P}_{c}(w_k,\mbox{\boldmath$r$}) = 1 - Pr \left\{ \mbox{``collision''}|(w_k,\mbox{\boldmath$r$}) \right\} - Pr\left\{ (\hat{w}_k,\hat{r}_k) = (w_k,r_k) |(w_k,\mbox{\boldmath$r$})\right\},\forall  (w_k,\mbox{\boldmath$r$}), \mbox{\boldmath$r$} \notin \mathcal{R} ,\nonumber\\
&&P_{es} = \max\left\{ \max_{(w_k,\mbox{\scriptsize\boldmath$r$}), \mbox{\scriptsize\boldmath$r$} \in \mathcal{R}}P_{e}(w_k,\mbox{\boldmath$r$}), \max_{(w_k,\mbox{\scriptsize\boldmath$r$}), \mbox{\scriptsize\boldmath$r$} \notin \mathcal{R}}\bar{P}_{c}(w_k,\mbox{\boldmath$r$})  \right\}.
\end{eqnarray}
The following theorem gives an upper bound on the achievable system error probability of the single-user decoder.

\begin{theorem}{\label{Theorem2}}
Consider a $K$-user random multiple access system over a discrete-time memoryless channel $P_{Y|\mbox{\scriptsize\boldmath$X$}}$, with the receiver only interested in recovering the message for user $k$. Assume the receiver chooses an operation region $\mathcal{R}$. Let $\sigma$ be an arbitrary partitioning of the operation region $\mathcal{R}$ satisfying
\begin{eqnarray}
&& \mathcal{R}=\bigcup_{\mathcal{D}: \mathcal{D}\subseteq\{1, \cdots, K\}, k\in \mathcal{D}} \mathcal{R}_{\mathcal{D}}, \nonumber \\
&& \mathcal{R}_{\mathcal{D}'}\cap \mathcal{R}_{\mathcal{D}}=\phi, \forall \mathcal{D}, \mathcal{D}'\subseteq\{1, \cdots, K\}, \mathcal{D}' \ne \mathcal{D}, k\in \mathcal{D}, \mathcal{D}'.
\label{RegionPartition}
\end{eqnarray}
System error probability of the single-user decoder is upper-bounded by,
\begin{equation}
P_{es} \le \min_{\sigma} \sum_{\mathcal{D}: \mathcal{D}\subseteq\{1, \cdots, K\}, k\in \mathcal{D}} P_{es}(\mathcal{D}, \mathcal{R}_{\mathcal{D}}),
\label{RegionPartitioning}
\end{equation}
where $P_{es}(\mathcal{D}, \mathcal{R}_{\mathcal{D}})$ is the system error probability bound of the $(\mathcal{D}, \mathcal{R}_{\mathcal{D}})$-decoder, and can be further bounded by (\ref{SRSBound}). $\QED$
\end{theorem}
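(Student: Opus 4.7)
The plan is to fix an arbitrary admissible partition $\sigma=\{\mathcal{R}_{\mathcal{D}}\}$ satisfying (\ref{RegionPartition}) and to establish the sum bound $P_{es}\le \sum_{\mathcal{D}\subseteq\{1,\cdots,K\},\,k\in\mathcal{D}} P_{es}(\mathcal{D},\mathcal{R}_{\mathcal{D}})$ by a union-bound argument over the error events of the constituent $(\mathcal{D},\mathcal{R}_{\mathcal{D}})$-sub-decoders; minimizing this bound over $\sigma$ then produces (\ref{RegionPartitioning}). Every $P_{es}(\mathcal{D},\mathcal{R}_{\mathcal{D}})$ is already controlled by Lemma~\ref{Lemma1}, so no further coding-theoretic computation is required; the entire argument is a set-theoretic reduction of the aggregated single-user error event to events of the individual sub-decoders.

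I would split the analysis according to whether $\mbox{\boldmath$r$}\in\mathcal{R}$ or $\mbox{\boldmath$r$}\not\in\mathcal{R}$, matching the two branches of $P_{es}$ in (\ref{SingleDecodingProbDef}). For $\mbox{\boldmath$r$}\in\mathcal{R}$, the disjointness property in (\ref{RegionPartition}) singles out a unique $\mathcal{D}^*\ni k$ with $\mbox{\boldmath$r$}\in\mathcal{R}_{\mathcal{D}^*}$, while $\mbox{\boldmath$r$}\not\in\mathcal{R}_{\mathcal{D}}$ for every other $\mathcal{D}\ni k$. Under the aggregation rule described just before the statement of Theorem~\ref{Theorem2}, if the $(\mathcal{D}^*,\mathcal{R}_{\mathcal{D}^*})$-decoder produces the correct $(\mbox{\boldmath$w$}_{\mathcal{D}^*},\mbox{\boldmath$r$}_{\mathcal{D}^*})$ and every other $(\mathcal{D},\mathcal{R}_{\mathcal{D}})$-decoder either reports a collision or happens to coincide with $(w_k,r_k)$ on user $k$, then the single-user decoder outputs the correct $(w_k,r_k)$. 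Contrapositively, the event $\{(\hat w_k,\hat r_k)\neq (w_k,r_k)\}$ is contained in the union of the decoding-error event of the $(\mathcal{D}^*,\mathcal{R}_{\mathcal{D}^*})$-sub-decoder together with the collision miss-detection events of the remaining $(\mathcal{D},\mathcal{R}_{\mathcal{D}})$-sub-decoders for $\mathcal{D}\ni k$, $\mathcal{D}\neq \mathcal{D}^*$. Each of these events has probability at most $P_{es}(\mathcal{D},\mathcal{R}_{\mathcal{D}})$ by Lemma~\ref{Lemma1}, and a union bound yields the desired estimate on $P_e(w_k,\mbox{\boldmath$r$})$ uniformly in $(w_k,\mbox{\boldmath$r$})$.

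For $\mbox{\boldmath$r$}\not\in\mathcal{R}$, every sub-decoder sees $\mbox{\boldmath$r$}\not\in\mathcal{R}_{\mathcal{D}}$ and is supposed to report a collision. The aggregated decoder misses the collision only if at least one sub-decoder emits a message, the emitting sub-decoders agree on user $k$, and their common $\hat w_k$ differs from $w_k$. This event is contained in the union over $\mathcal{D}\ni k$ of the individual collision miss-detection events of the $(\mathcal{D},\mathcal{R}_{\mathcal{D}})$-sub-decoders, so a second union bound gives $\bar P_c(w_k,\mbox{\boldmath$r$})\le \sum_{\mathcal{D}\ni k} P_{es}(\mathcal{D},\mathcal{R}_{\mathcal{D}})$. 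Taking the maximum over both cases (which produces the same bound) and then minimizing over all admissible partitions $\sigma$ yields (\ref{RegionPartitioning}).

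The step I would treat with the most care, and the main obstacle in the argument, is the set-theoretic inclusion in the first case. The subtle scenario is the one in which the intended $(\mathcal{D}^*,\mathcal{R}_{\mathcal{D}^*})$-sub-decoder is correct on user $k$ yet some other $(\mathcal{D},\mathcal{R}_{\mathcal{D}})$-sub-decoder emits a conflicting $\hat w_k$, forcing the single-user decoder to report an aggregated collision. One must verify that this aggregated-collision-after-disagreement outcome is genuinely absorbed into the collision miss-detection event of the offending sub-decoder, so that the union bound over the named events is valid. Once this inclusion is verified, the remainder of the proof is routine bookkeeping together with an invocation of Lemma~\ref{Lemma1}.
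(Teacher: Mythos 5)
Your proposal is correct and follows essentially the same route as the paper: decompose the aggregated single-user error event into the decoding-error and collision-miss-detection events of the constituent $(\mathcal{D},\mathcal{R}_{\mathcal{D}})$-decoders, apply a union bound, invoke Lemma~\ref{Lemma1} for each term, and minimize over the partition $\sigma$. In fact your write-up is more explicit than the paper's own proof, which merely notes that non-overlapping regions dominate overlapping ones and states that ``the rest is implied by Lemma~\ref{Lemma1}''; your careful verification that a disagreeing sub-decoder's output is absorbed into its collision miss-detection event (which, by (\ref{M-CollisionErrorDef}), excludes only the fully correct output) is exactly the point the paper leaves implicit.
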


\begin{proof}
Because a $(\mathcal{D}, \mathcal{R}_{\mathcal{D}})$-decoder can always choose to report a collision even if it can decode the messages, its system error probability can be improved by shrinking the operation region $\mathcal{R}_{\mathcal{D}}$. This implies that the receiver of the random access system should partition its operation region $\mathcal{R}$ into $\mathcal{R}_{\mathcal{D}}$ regions that do not overlap with each other. In other words, replacing (\ref{RegionAssignment}) by (\ref{RegionPartition}) will improve the system error performance. The rest of the proof is implied by Lemma \ref{Lemma1}.
\end{proof}

Note that the system error probability bound provided in Theorem \ref{Theorem2} is implicit since the optimal partitioning scheme $\sigma$ that maximize the right hand side of (\ref{RegionPartitioning}) is not specified. To find the optimal partitioning, one essentially needs to compute every single term on the right hand side of (\ref{RegionPartitioning}) and (\ref{BoundLemma1}) for all rate options and all user subsets. Because both $E_{m\mathcal{D}}(\mathcal{S}, \mbox{\boldmath $r$}, \tilde{\mbox{\boldmath $r$}})$ and $E_{i\mathcal{D}}(\mathcal{S}, \mbox{\boldmath $r$}, \mbox{\boldmath $r$}')$ defined in (\ref{EmEiDDcoder}) involve the combinations of two user subsets and two rate vectors, the computational complexity of finding the optimal partitioning scheme is $O\left((2M)^{2K}\right)$.

\section{Conclusions}
\label{SectionConslusions}
We investigated the error performance of the random multiple access system over a compound discrete-time memoryless channel. An achievable bound on the system error probability was derived under the non-asymptotic assumption of a finite codeword length. We showed that the results can be extended to the random multiple access system where the receiver is only interested in decoding messages for a user subset.

\appendix
\subsection{Proof of Theorem \ref{TheoremMC}}
\label{AppendixTheoremMC}
We assume that the following decoding algorithm is used at the receiver. Given the received channel output symbols $\mbox{\boldmath$y$}$, the receiver outputs a message and rate vector pair $(\mbox{\boldmath$w$},\mbox{\boldmath$r$})$ together with a channel realization $P_{Y|\mbox{\scriptsize\boldmath$X$}}$ such that $(\mbox{\boldmath$r$},P_{Y|\mbox{\scriptsize\boldmath$X$}}) \in \mathcal{R}$ if the following condition is satisfied, for all user subsets $\mathcal{S} \subset \{1,\cdots,K\}$,
\begin{eqnarray}
\label{CompoundCriterionAppendix}
&& -\frac{1}{N}\log Pr\{\mbox{\boldmath $y$}|\mbox{\boldmath $x$}_{(\mbox{\scriptsize\boldmath$w$},\mbox{\scriptsize\boldmath$r$})},P_{Y|\mbox{\scriptsize\boldmath$X$}}\} < -\frac{1}{N}\log Pr\{\mbox{\boldmath $y$}|\mbox{\boldmath $x$}_{(\tilde{\mbox{\scriptsize\boldmath$w$}}, \tilde{\mbox{\scriptsize\boldmath$r$}})},\tilde{P}_{Y|\mbox{\scriptsize\boldmath$X$}}\}, \nonumber \\
&&\quad \mbox{ for all } (\tilde{\mbox{\boldmath$w$}},\tilde{\mbox{\boldmath$r$}},\tilde{P}_{Y|\mbox{\scriptsize\boldmath$X$}}), (\tilde{\mbox{\boldmath$w$}}_{\mathcal{S}},\tilde{\mbox{\boldmath$r$}}_{\mathcal{S}}) = (\mbox{\boldmath$w$}_{\mathcal{S}}, \mbox{\boldmath$r$}_{\mathcal{S}}),(\tilde{w}_{k},\tilde{r}_k)\neq (w_k,r_k), \forall k \notin \mathcal{S},\nonumber\\
&&\quad \mbox{ and }(\tilde{\mbox{\boldmath$w$}}, \tilde{\mbox{\boldmath$r$}},\tilde{P}_{Y|\mbox{\scriptsize\boldmath$X$}}), (\mbox{\boldmath$w$}, \mbox{\boldmath$r$}, P_{Y|\mbox{\scriptsize\boldmath$X$}}) \in \mathcal{R}_{(\mathcal{S},\mbox{\scriptsize \boldmath $y$})}, \mbox{ with} \nonumber\\
&& \mathcal{R}_{(\mathcal{S},\mbox{\scriptsize \boldmath $y$})}=\left\{\left.(\tilde{\mbox{\boldmath$w$}}, \tilde{\mbox{\boldmath$r$}},\tilde{P}_{Y|\mbox{\scriptsize\boldmath$X$}})\right| (\tilde{\mbox{\boldmath $r$}},\tilde{P}_{Y|\mbox{\scriptsize\boldmath$X$}})\in \mathcal{R},  -\frac{1}{N}\log Pr\{\mbox{\boldmath $y$}|\mbox{\boldmath $x$}_{(\tilde{\mbox{\scriptsize\boldmath$w$}}, \tilde{\mbox{\scriptsize\boldmath$r$}})}, \tilde{P}_{Y|\mbox{\scriptsize\boldmath$X$}}\} < \tau_{(\tilde{\mbox{\scriptsize\boldmath$r$}}, \tilde{P}_{Y|\mbox{\tiny\boldmath$X$}},\mathcal{S})}(\mbox{\boldmath$y$}) \right\},
\end{eqnarray}
where $\tau_{(\tilde{\mbox{\scriptsize\boldmath$r$}}, \tilde{P}_{Y|\mbox{\tiny\boldmath$X$}},\mathcal{S})}(\cdot)$ is a per-determined typicality threshold function of the channel output symbols $\mbox{\boldmath$y$}$, associated with the rate and channel realization pair $(\tilde{\mbox{\boldmath$r$}}, \tilde{P}_{Y|\mbox{\scriptsize\boldmath$X$}})$ and the user subset $\mathcal{S}$. If there is no codeword satisfying (\ref{CompoundCriterionAppendix}), the receiver reports a collision. In other words, for a given $\mathcal{S}$, the receiver searches for the subset of codewords with likelihood values larger than the corresponding typicality threshold. If the subset is not empty, the receiver outputs the codeword with the maximum likelihood value as the estimate for this given $\mathcal{S}$. If the estimates for all $\mathcal{S} \subset \{1,\cdots, K\}$ agree with each other, the receiver regards this estimate as the decoding decision and outputs the corresponding decoded message and rate pair. Otherwise, the receiver reports a collision. Note that in (\ref{CompoundCriterionAppendix}), for given $\mathcal{S}$ and ($\mbox{\boldmath$w$},\mbox{\boldmath$r$}$), we only compare the likelihood value of codeword vector $\mbox{\boldmath$x$}_{(\mbox{\scriptsize\boldmath$w$},\mbox{\scriptsize\boldmath$r$})}$ with those of the codeword vectors satisfying $(\tilde{\mbox{\boldmath$w$}}_{\mathcal{S}},\tilde{\mbox{\boldmath$r$}}_{\mathcal{S}}) = (\mbox{\boldmath$w$}_{\mathcal{S}}, \mbox{\boldmath$r$}_{\mathcal{S}}),(\tilde{w}_{k},\tilde{r}_k)\neq (w_k,r_k), \forall k \notin \mathcal{S}$. We will first analyze the error performance for each user subset $\mathcal{S}$ and then derive the overall error performance by taking the union over all $\mathcal{S}$.

Given a user subset $\mathcal{S} \subset \{1,\cdots,K\}$, we define the following probability terms.

First, assume $(\mbox{\boldmath$w$},\mbox{\boldmath$r$})$ is transmitted over channel $P_{Y|\mbox{\scriptsize\boldmath$X$}}$, with $(\mbox{\boldmath$r$},P_{Y|\mbox{\scriptsize\boldmath$X$}}) \in \mathcal{R}$. Let $P_{t[\mbox{\scriptsize\boldmath$r$},P_{Y|\mbox{\tiny\boldmath$X$}},\mathcal{S}]}$ be the probability that the likelihood value of the transmitted codeword vector over the channel $P_{Y|\mbox{\scriptsize\boldmath$X$}}$ is no larger than the corresponding typicality threshold,
\begin{eqnarray}
\label{ProofPtCC}
P_{t[\mbox{\scriptsize\boldmath$r$},P_{Y|\mbox{\tiny\boldmath$X$}},\mathcal{S}]} = Pr \left\{ P(\mbox{\boldmath $y$}|\mbox{\boldmath $x$}_{(\mbox{\scriptsize\boldmath$w$},\mbox{\scriptsize\boldmath$r$})},P_{Y|\mbox{\scriptsize\boldmath$X$}}) \le e^{-N\tau_{(\mbox{\tiny\boldmath$r$}, P_{Y|\mbox{\tiny\boldmath$X$}},\mathcal{S})}(\mbox{\scriptsize\boldmath$y$})}\right\}.
\end{eqnarray}
Define $P_{m[(\mbox{\scriptsize\boldmath$r$},P_{Y|\mbox{\tiny\boldmath$X$}}),(\tilde{\mbox{\scriptsize\boldmath$r$}},\tilde{P}_{Y|\mbox{\tiny\boldmath$X$}}),\mathcal{S}]}$ as the probability that the likelihood value of the transmitted codeword vector over the channel realization $P_{Y|\mbox{\scriptsize\boldmath$X$}}$ is no larger than that of another codeword $(\tilde{\mbox{\boldmath$w$}},\tilde{\mbox{\boldmath$r$}})$ with $(\tilde{\mbox{\boldmath$w$}}_{\mathcal{S}},\tilde{\mbox{\boldmath$r$}}_{\mathcal{S}}) = (\mbox{\boldmath$w$}_{\mathcal{S}}, \mbox{\boldmath$r$}_{\mathcal{S}}),(\tilde{w}_{k},\tilde{r}_k)\neq (w_k,r_k), \forall k \notin \mathcal{S}$, over channel $\tilde{P}_{Y|\mbox{\scriptsize\boldmath$X$}}$ with $(\tilde{\mbox{\boldmath$r$}},\tilde{P}_{Y|\mbox{\scriptsize\boldmath$X$}}) \in \mathcal{R}$,
\begin{eqnarray}
\label{ProofPmMC}
&& P_{m[(\mbox{\scriptsize\boldmath$r$},P_{Y|\mbox{\tiny\boldmath$X$}}),(\tilde{\mbox{\scriptsize\boldmath$r$}},\tilde{P}_{Y|\mbox{\tiny\boldmath$X$}}),\mathcal{S}]} = Pr \left\{ P(\mbox{\boldmath $y$}|\mbox{\boldmath $x$}_{(\mbox{\scriptsize\boldmath$w$},\mbox{\scriptsize\boldmath$r$})},P_{Y|\mbox{\scriptsize\boldmath$X$}}) \le P(\mbox{\boldmath $y$}|\mbox{\boldmath $x$}_{(\tilde{\mbox{\scriptsize\boldmath$w$}}, \tilde{\mbox{\scriptsize\boldmath$r$}})},\tilde{P}_{Y|\mbox{\scriptsize\boldmath$X$}})\right\}\nonumber\\
&& \quad (\tilde{\mbox{\boldmath$w$}},\tilde{\mbox{\boldmath$r$}},\tilde{P}_{Y|\mbox{\scriptsize\boldmath$X$}}),(\tilde{\mbox{\boldmath$r$}},\tilde{P}_{Y|\mbox{\scriptsize\boldmath$X$}}) \in \mathcal{R}, (\tilde{\mbox{\boldmath$w$}}_{\mathcal{S}},\tilde{\mbox{\boldmath$r$}}_{\mathcal{S}}) = (\mbox{\boldmath$w$}_{\mathcal{S}}, \mbox{\boldmath$r$}_{\mathcal{S}}),(\tilde{w}_{k},\tilde{r}_k)\neq (w_k,r_k), \forall k \notin \mathcal{S}.
\end{eqnarray}

Second, assume that $(\tilde{\mbox{\boldmath$w$}},\tilde{\mbox{\boldmath$r$}})$ is transmitted over channel $\tilde{P}_{Y|\mbox{\scriptsize\boldmath$X$}}$, with $(\tilde{\mbox{\boldmath$r$}},\tilde{P}_{Y|\mbox{\scriptsize\boldmath$X$}}) \notin \mathcal{R}$. Define $P_{i[(\tilde{\mbox{\scriptsize\boldmath$r$}},\tilde{P}_{Y|\mbox{\tiny\boldmath$X$}}),(\mbox{\scriptsize\boldmath$r$},P_{Y|\mbox{\tiny\boldmath$X$}}),\mathcal{S}]}$ as the probability that the decoder finds a codeword $(\mbox{\boldmath$w$},\mbox{\boldmath$r$})$ with $ (\mbox{\boldmath$w$}_{\mathcal{S}}, \mbox{\boldmath$r$}_{\mathcal{S}})= (\tilde{\mbox{\boldmath$w$}}_{\mathcal{S}},\tilde{\mbox{\boldmath$r$}}_{\mathcal{S}}), (w_k,r_k) \neq (\tilde{w}_{k},\tilde{r}_k), \forall k \notin \mathcal{S} $, over channel $P_{Y|\mbox{\scriptsize\boldmath$X$}}$ with $(\mbox{\boldmath$r$},P_{Y|\mbox{\scriptsize\boldmath$X$}}) \in \mathcal{R}$, such that its likelihood value is larger than the corresponding typicality threshold,
\begin{eqnarray}
\label{ProofPiCC}
&& P_{i[(\tilde{\mbox{\scriptsize\boldmath$r$}},\tilde{P}_{Y|\mbox{\tiny\boldmath$X$}}),(\mbox{\scriptsize\boldmath$r$},P_{Y|\mbox{\tiny\boldmath$X$}}),\mathcal{S}]} = Pr \left\{ P(\mbox{\boldmath $y$}|\mbox{\boldmath $x$}_{(\mbox{\scriptsize\boldmath$w$},\mbox{\scriptsize\boldmath$r$})},P_{Y|\mbox{\scriptsize\boldmath$X$}}) > e^{-N\tau_{(\mbox{\tiny\boldmath$r$}, P_{Y|\mbox{\tiny\boldmath$X$}},\mathcal{S})}(\mbox{\scriptsize\boldmath$y$})}\right\},\nonumber\\
&& \quad (\mbox{\boldmath$w$},\mbox{\boldmath$r$},P_{Y|\mbox{\scriptsize\boldmath$X$}}),(\mbox{\boldmath$r$},P_{Y|\mbox{\scriptsize\boldmath$X$}}) \in \mathcal{R}, (\mbox{\boldmath$w$}_{\mathcal{S}}, \mbox{\boldmath$r$}_{\mathcal{S}})= (\tilde{\mbox{\boldmath$w$}}_{\mathcal{S}},\tilde{\mbox{\boldmath$r$}}_{\mathcal{S}}), (w_k,r_k) \neq (\tilde{w}_{k},\tilde{r}_k), \forall k \notin \mathcal{S}.
\end{eqnarray}

With the above probability definitions, by applying the union bound over all $\mathcal{S}$, we can upper-bound the system error probability by
\begin{eqnarray}
\label{ProofPes}
P_{es} \le \max \left\{ \begin{array}{l} \max_{(\mbox{\scriptsize\boldmath$r$},P_{Y|\mbox{\tiny\boldmath$X$}}) \in \mathcal{R}} \sum_{\mathcal{S} \subset \{1,\cdots,K\}} \left[ P_{t[\mbox{\scriptsize\boldmath$r$},P_{Y|\mbox{\tiny\boldmath$X$}},\mathcal{S}]} + \sum_{(\tilde{\mbox{\scriptsize\boldmath$r$}},\tilde{P}_{Y|\mbox{\tiny\boldmath$X$}}) \in \mathcal{R},\tilde{\mbox{\scriptsize\boldmath$r$}}_{\mathcal{S}} = \mbox{\scriptsize\boldmath$r$}_{\mathcal{S}}} P_{m[(\mbox{\scriptsize\boldmath$r$},P_{Y|\mbox{\tiny\boldmath$X$}}),(\tilde{\mbox{\scriptsize\boldmath$r$}},\tilde{P}_{Y|\mbox{\tiny\boldmath$X$}}),\mathcal{S}]} \right] , \\
\max_{(\tilde{\mbox{\scriptsize\boldmath$r$}},\tilde{P}_{Y|\mbox{\tiny\boldmath$X$}}) \notin \mathcal{R}} \sum_{\mathcal{S} \subset \{1,\cdots,K\}}
\sum_{(\mbox{\scriptsize\boldmath$r$},P_{Y|\mbox{\tiny\boldmath$X$}}) \in \mathcal{R},\mbox{\scriptsize\boldmath$r$}_{\mathcal{S}} = \tilde{\mbox{\scriptsize\boldmath$r$}}_{\mathcal{S}}} P_{i[(\tilde{\mbox{\scriptsize\boldmath$r$}},\tilde{P}_{Y|\mbox{\tiny\boldmath$X$}}),(\mbox{\scriptsize\boldmath$r$},P_{Y|\mbox{\tiny\boldmath$X$}}),\mathcal{S}]} \end{array}\right\}.
\end{eqnarray}
Next, we will derive individual upper-bounds for each of the probability terms on the right hand side of (\ref{ProofPes}).

{\bf Step I: } Upper-bounding $P_{m[(\mbox{\scriptsize\boldmath$r$},P_{Y|\mbox{\tiny\boldmath$X$}}),(\tilde{\mbox{\scriptsize\boldmath$r$}},\tilde{P}_{Y|\mbox{\tiny\boldmath$X$}}),\mathcal{S}]} $

Denote $E_{\theta}$ as the expectation operator over random variable $\theta$ which is defined in Section \ref{Section MRAC}. Consequently, given $(\mbox{\boldmath$r$},P_{Y|\mbox{\scriptsize\boldmath$X$}}),(\tilde{\mbox{\boldmath$r$}},\tilde{P}_{Y|\mbox{\scriptsize\boldmath$X$}}) \in \mathcal{R}$, $P_{m[(\mbox{\scriptsize\boldmath$r$},P_{Y|\mbox{\tiny\boldmath$X$}}),(\tilde{\mbox{\scriptsize\boldmath$r$}},\tilde{P}_{Y|\mbox{\tiny\boldmath$X$}}),\mathcal{S}]} $ defined in (\ref{ProofPmMC}) can be rewritten as
\begin{eqnarray}
\label{PmBound1}
P_{m[(\mbox{\scriptsize\boldmath$r$},P_{Y|\mbox{\tiny\boldmath$X$}}),(\tilde{\mbox{\scriptsize\boldmath$r$}},\tilde{P}_{Y|\mbox{\tiny\boldmath$X$}}),\mathcal{S}]} = E_{\theta} \left[\sum_{\mbox{\scriptsize\boldmath$y$}} P(\mbox{\boldmath $y$}|\mbox{\boldmath $x$}_{(\mbox{\scriptsize\boldmath$w$},\mbox{\scriptsize\boldmath$r$})},P_{Y|\mbox{\scriptsize\boldmath$X$}}) \phi_{m[(\mbox{\scriptsize\boldmath$r$},P_{Y|\mbox{\tiny\boldmath$X$}}),(\tilde{\mbox{\scriptsize\boldmath$r$}},\tilde{P}_{Y|\mbox{\tiny\boldmath$X$}}),\mathcal{S}]} (\mbox{\boldmath$y$}) \right],
\end{eqnarray}
where $\phi_{m[(\mbox{\scriptsize\boldmath$r$},P_{Y|\mbox{\tiny\boldmath$X$}}),(\tilde{\mbox{\scriptsize\boldmath$r$}},\tilde{P}_{Y|\mbox{\tiny\boldmath$X$}}),\mathcal{S}]} (\mbox{\boldmath$y$}) = 1$ if $P(\mbox{\boldmath $y$}|\mbox{\boldmath $x$}_{(\mbox{\scriptsize\boldmath$w$},\mbox{\scriptsize\boldmath$r$})},P_{Y|\mbox{\scriptsize\boldmath$X$}}) \le P(\mbox{\boldmath $y$}|\mbox{\boldmath $x$}_{(\tilde{\mbox{\scriptsize\boldmath$w$}}, \tilde{\mbox{\scriptsize\boldmath$r$}})},\tilde{P}_{Y|\mbox{\scriptsize\boldmath$X$}})$ for some triplet $(\tilde{\mbox{\boldmath$w$}},\tilde{\mbox{\boldmath$r$}},\tilde{P}_{Y|\mbox{\scriptsize\boldmath$X$}})$ with $(\tilde{\mbox{\boldmath$r$}},\tilde{P}_{Y|\mbox{\scriptsize\boldmath$X$}}) \in \mathcal{R}, (\tilde{\mbox{\boldmath$w$}}_{\mathcal{S}},\tilde{\mbox{\boldmath$r$}}_{\mathcal{S}}) = (\mbox{\boldmath$w$}_{\mathcal{S}}, \mbox{\boldmath$r$}_{\mathcal{S}}),(\tilde{w}_{k},\tilde{r}_k)\neq (w_k,r_k), \forall k \notin \mathcal{S}$. Otherwise, $\phi_{m[(\mbox{\scriptsize\boldmath$r$},P_{Y|\mbox{\tiny\boldmath$X$}}),(\tilde{\mbox{\scriptsize\boldmath$r$}},\tilde{P}_{Y|\mbox{\tiny\boldmath$X$}}),\mathcal{S}]} (\mbox{\boldmath$y$}) = 0$. We can upper-bound $\phi_{m[(\mbox{\scriptsize\boldmath$r$},P_{Y|\mbox{\tiny\boldmath$X$}}),(\tilde{\mbox{\scriptsize\boldmath$r$}},\tilde{P}_{Y|\mbox{\tiny\boldmath$X$}}),\mathcal{S}]} (\mbox{\boldmath$y$}) $ for any constants $\rho>0$ and $s>0$ as follows,
\begin{eqnarray}
\label{PhiBoundPmCC}
\phi_{m[(\mbox{\scriptsize\boldmath$r$},P_{Y|\mbox{\tiny\boldmath$X$}}),(\tilde{\mbox{\scriptsize\boldmath$r$}},\tilde{P}_{Y|\mbox{\tiny\boldmath$X$}}),\mathcal{S}]} (\mbox{\boldmath$y$}) \le \left[ \frac{\sum_{\tilde{\mbox{\scriptsize\boldmath$w$}}, (\tilde{\mbox{\scriptsize\boldmath$w$}}_{\mathcal{S}},\tilde{\mbox{\scriptsize\boldmath$r$}}_{\mathcal{S}}) = (\mbox{\scriptsize\boldmath$w$}_{\mathcal{S}}, \mbox{\scriptsize\boldmath$r$}_{\mathcal{S}}),(\tilde{w}_{k},\tilde{r}_k)\neq (w_k,r_k), \forall k \notin \mathcal{S}} P(\mbox{\boldmath $y$}|\mbox{\boldmath $x$}_{(\tilde{\mbox{\scriptsize\boldmath$w$}},\tilde{\mbox{\scriptsize\boldmath$r$}})},\tilde{P}_{Y|\mbox{\scriptsize\boldmath$X$}})^{\frac{s}{\rho}} } {P(\mbox{\boldmath $y$}|\mbox{\boldmath $x$}_{(\mbox{\scriptsize\boldmath$w$},\mbox{\scriptsize\boldmath$r$})},P_{Y|\mbox{\scriptsize\boldmath$X$}})^{\frac{s}{\rho}} }  \right]^{\rho}.
\end{eqnarray}
Substituting (\ref{PhiBoundPmCC}) back into (\ref{PmBound1}) gives,
\begin{eqnarray}
\label{PmBound2}
P_{m[(\mbox{\scriptsize\boldmath$r$},P_{Y|\mbox{\tiny\boldmath$X$}}),(\tilde{\mbox{\scriptsize\boldmath$r$}},\tilde{P}_{Y|\mbox{\tiny\boldmath$X$}}),\mathcal{S}]} &\le& E_{\theta} \left[ \sum_{\mbox{\scriptsize\boldmath$y$}} P(\mbox{\boldmath $y$}|\mbox{\boldmath $x$}_{(\mbox{\scriptsize\boldmath$w$},\mbox{\scriptsize\boldmath$r$})},P_{Y|\mbox{\scriptsize\boldmath$X$}}) \right.\nonumber\\
&& \left. \times \left[ \frac{\sum_{\tilde{\mbox{\scriptsize\boldmath$w$}}, (\tilde{\mbox{\scriptsize\boldmath$w$}}_{\mathcal{S}},\tilde{\mbox{\scriptsize\boldmath$r$}}_{\mathcal{S}}) = (\mbox{\scriptsize\boldmath$w$}_{\mathcal{S}}, \mbox{\scriptsize\boldmath$r$}_{\mathcal{S}}),(\tilde{w}_{k},\tilde{r}_k)\neq (w_k,r_k), \forall k \notin \mathcal{S}} P(\mbox{\boldmath $y$}|\mbox{\boldmath $x$}_{(\tilde{\mbox{\scriptsize\boldmath$w$}},\tilde{\mbox{\scriptsize\boldmath$r$}})},\tilde{P}_{Y|\mbox{\scriptsize\boldmath$X$}})^{\frac{s}{\rho}} } {P(\mbox{\boldmath $y$}|\mbox{\boldmath $x$}_{(\mbox{\scriptsize\boldmath$w$},\mbox{\scriptsize\boldmath$r$})},P_{Y|\mbox{\scriptsize\boldmath$X$}})^{\frac{s}{\rho}} }  \right]^{\rho}
\right] \nonumber\\
& = & \sum_{\mbox{\scriptsize\boldmath$y$}} E_{\theta_{\mathcal{S}}} \left[  E_{\theta_{\bar{\mathcal{S}}}} \left[  P(\mbox{\boldmath $y$}|\mbox{\boldmath $x$}_{(\mbox{\scriptsize\boldmath$w$},\mbox{\scriptsize\boldmath$r$})},P_{Y|\mbox{\scriptsize\boldmath$X$}}) ^{1-s} \right] \right.\nonumber\\
&& \left.\times E_{\theta_{\bar{\mathcal{S}}}} \left[ \left[ \sum_{\tilde{\mbox{\scriptsize\boldmath$w$}}, (\tilde{\mbox{\scriptsize\boldmath$w$}}_{\mathcal{S}},\tilde{\mbox{\scriptsize\boldmath$r$}}_{\mathcal{S}}) = (\mbox{\scriptsize\boldmath$w$}_{\mathcal{S}}, \mbox{\scriptsize\boldmath$r$}_{\mathcal{S}}),(\tilde{w}_{k},\tilde{r}_k)\neq (w_k,r_k), \forall k \notin \mathcal{S}} P(\mbox{\boldmath $y$}|\mbox{\boldmath $x$}_{(\tilde{\mbox{\scriptsize\boldmath$w$}},\tilde{\mbox{\scriptsize\boldmath$r$}})},\tilde{P}_{Y|\mbox{\scriptsize\boldmath$X$}})^{\frac{s}{\rho}}  \right]^{\rho} \right]\right].
\end{eqnarray}
The second step in (\ref{PmBound2}) is due to the independence between the codewords corresponding to $(\mbox{\boldmath$w$}_{\bar{\mathcal{S}}},\mbox{\boldmath$r$}_{\bar{\mathcal{S}}})$ and $(\tilde{\mbox{\boldmath$w$}}_{\bar{\mathcal{S}}},\tilde{\mbox{\boldmath$r$}}_{\bar{\mathcal{S}}})$.

With the assumption of $0<\rho\le 1$, we can further bound $P_{m[(\mbox{\scriptsize\boldmath$r$},P_{Y|\mbox{\tiny\boldmath$X$}}),(\tilde{\mbox{\scriptsize\boldmath$r$}},\tilde{P}_{Y|\mbox{\tiny\boldmath$X$}}),\mathcal{S}]} $ by
\begin{eqnarray}
\label{PmBound3}
&&P_{m[(\mbox{\scriptsize\boldmath$r$},P_{Y|\mbox{\tiny\boldmath$X$}}),(\tilde{\mbox{\scriptsize\boldmath$r$}},\tilde{P}_{Y|\mbox{\tiny\boldmath$X$}}),\mathcal{S}]} \le \sum_{\mbox{\scriptsize\boldmath$y$}} E_{\theta_{\mathcal{S}}} \left[  E_{\theta_{\bar{\mathcal{S}}}} \left[  P(\mbox{\boldmath $y$}|\mbox{\boldmath $x$}_{(\mbox{\scriptsize\boldmath$w$},\mbox{\scriptsize\boldmath$r$})},P_{Y|\mbox{\scriptsize\boldmath$X$}}) ^{1-s} \right] \right.\nonumber\\
&&\qquad\qquad\qquad\qquad \left. \times E_{\theta_{\bar{\mathcal{S}}}} \left[ \left [\sum_{\tilde{\mbox{\scriptsize\boldmath$w$}}, (\tilde{\mbox{\scriptsize\boldmath$w$}}_{\mathcal{S}},\tilde{\mbox{\scriptsize\boldmath$r$}}_{\mathcal{S}}) = (\mbox{\scriptsize\boldmath$w$}_{\mathcal{S}}, \mbox{\scriptsize\boldmath$r$}_{\mathcal{S}})} P(\mbox{\boldmath $y$}|\mbox{\boldmath $x$}_{(\tilde{\mbox{\scriptsize\boldmath$w$}},\tilde{\mbox{\scriptsize\boldmath$r$}})},\tilde{P}_{Y|\mbox{\scriptsize\boldmath$X$}})^{\frac{s}{\rho}}  \right]^{\rho} \right] \right] \nonumber\\
&& \le e^{N\rho \sum_{k\notin\mathcal{S}} \tilde{r}_k} \sum_{\mbox{\scriptsize\boldmath$y$}} E_{\theta_{\mathcal{S}}} \left[  E_{\theta_{\bar{\mathcal{S}}}} \left[ P(\mbox{\boldmath $y$}|\mbox{\boldmath $x$}_{(\mbox{\scriptsize\boldmath$w$},\mbox{\scriptsize\boldmath$r$})},P_{Y|\mbox{\scriptsize\boldmath$X$}}) ^{1-s} \right] E_{\theta_{\bar{\mathcal{S}}}} \left[ \left[  P(\mbox{\boldmath $y$}|\mbox{\boldmath $x$}_{(\tilde{\mbox{\scriptsize\boldmath$w$}},\tilde{\mbox{\scriptsize\boldmath$r$}})},\tilde{P}_{Y|\mbox{\scriptsize\boldmath$X$}})^{\frac{s}{\rho}}  \right]^{\rho} \right] \right].
\end{eqnarray}

It is easy to verify that the bound in (\ref{PmBound3}) holds for all $0<\rho\le 1$ and $s>0$, and becomes trivial for $s>1$. Consequently, (\ref{PmBound3}) gives the following upper bound,
\begin{eqnarray}
\label{PmBound4}
P_{m[(\mbox{\scriptsize\boldmath$r$},P_{Y|\mbox{\tiny\boldmath$X$}}),(\tilde{\mbox{\scriptsize\boldmath$r$}},\tilde{P}_{Y|\mbox{\tiny\boldmath$X$}}),\mathcal{S}]} \le \exp \left\{ -NE_m(\mathcal{S}, \mbox{\boldmath$r$}, \tilde{\mbox{\boldmath$r$}}, P_{Y| \mbox{\scriptsize\boldmath$X$}},\tilde{P}_{Y| \mbox{\scriptsize\boldmath$X$}}) \right\},
\end{eqnarray}
where $E_m(\mathcal{S}, \mbox{\boldmath$r$}, \tilde{\mbox{\boldmath$r$}}, P_{Y| \mbox{\scriptsize\boldmath$X$}},\tilde{P}_{Y| \mbox{\scriptsize\boldmath$X$}})$ is specified in (\ref{EmEiMultiMC}).

{\bf Step II: } Upper-bounding $P_{t[\mbox{\scriptsize\boldmath$r$},P_{Y|\mbox{\tiny\boldmath$X$}},\mathcal{S}]}$

Given that $(\mbox{\boldmath$r$},P_{Y|\mbox{\scriptsize\boldmath$X$}}) \in \mathcal{R}$, we can rewrite $P_{t[\mbox{\scriptsize\boldmath$r$},P_{Y|\mbox{\tiny\boldmath$X$}},\mathcal{S}]} $, defined in (\ref{ProofPtCC}), as follows,
\begin{eqnarray}
\label{PtBound1}
P_{t[\mbox{\scriptsize\boldmath$r$},P_{Y|\mbox{\tiny\boldmath$X$}},\mathcal{S}]} = E_{\theta} \left[\sum_{\mbox{\scriptsize\boldmath$y$}} P(\mbox{\boldmath $y$}|\mbox{\boldmath $x$}_{(\mbox{\scriptsize\boldmath$w$},\mbox{\scriptsize\boldmath$r$})},P_{Y|\mbox{\scriptsize\boldmath$X$}}) \phi_{t[\mbox{\scriptsize\boldmath$r$},P_{Y|\mbox{\tiny\boldmath$X$}},\mathcal{S}]} (\mbox{\boldmath$y$}) \right],
\end{eqnarray}
where $\phi_{t[\mbox{\scriptsize\boldmath$r$},P_{Y|\mbox{\tiny\boldmath$X$}},\mathcal{S}]} (\mbox{\boldmath$y$}) = 1$ if $P(\mbox{\boldmath $y$}|\mbox{\boldmath $x$}_{(\mbox{\scriptsize\boldmath$w$},\mbox{\scriptsize\boldmath$r$})},P_{Y|\mbox{\scriptsize\boldmath$X$}}) \le e^{-N\tau_{(\mbox{\tiny\boldmath$r$}, P_{Y|\mbox{\tiny\boldmath$X$}},\mathcal{S})}(\mbox{\scriptsize\boldmath$y$})}$, otherwise $\phi_{t[\mbox{\scriptsize\boldmath$r$},P_{Y|\mbox{\tiny\boldmath$X$}},\mathcal{S}]} (\mbox{\boldmath$y$}) = 0$. Note that the value of $\tau_{(\mbox{\tiny\boldmath$r$}, P_{Y|\mbox{\tiny\boldmath$X$}},\mathcal{S})}(\mbox{\scriptsize\boldmath$y$})$ will be determined in Step IV. Similarly, we can bound $\phi_{t[\mbox{\scriptsize\boldmath$r$},P_{Y|\mbox{\tiny\boldmath$X$}},\mathcal{S}]} (\mbox{\boldmath$y$})$, for any $s_1>0$, as follows,
\begin{eqnarray}
\phi_{t[\mbox{\scriptsize\boldmath$r$},P_{Y|\mbox{\tiny\boldmath$X$}},\mathcal{S}]} (\mbox{\boldmath$y$}) \le
\frac{e^{-Ns_1\tau_{(\mbox{\tiny\boldmath$r$}, P_{Y|\mbox{\tiny\boldmath$X$}},\mathcal{S})}(\mbox{\scriptsize\boldmath$y$})}}
{P(\mbox{\boldmath $y$}|\mbox{\boldmath $x$}_{(\mbox{\scriptsize\boldmath$w$},\mbox{\scriptsize\boldmath$r$})},P_{Y|\mbox{\scriptsize\boldmath$X$}})^{s_1}}.
\end{eqnarray}
This yields,
\begin{eqnarray}
\label{MCBound4P_t}
P_{t[\mbox{\scriptsize\boldmath$r$},P_{Y|\mbox{\tiny\boldmath$X$}},\mathcal{S}]} &\le& E_{\theta} \left[\sum_{\mbox{\scriptsize\boldmath$y$}} P(\mbox{\boldmath $y$}|\mbox{\boldmath $x$}_{(\mbox{\scriptsize\boldmath$w$},\mbox{\scriptsize\boldmath$r$})},P_{Y|\mbox{\scriptsize\boldmath$X$}})^{1-s_1}
e^{-Ns_1\tau_{(\mbox{\tiny\boldmath$r$}, P_{Y|\mbox{\tiny\boldmath$X$}},\mathcal{S})}(\mbox{\scriptsize\boldmath$y$})} \right] \nonumber\\
&=&  \sum_{\mbox{\scriptsize\boldmath$y$}} E_{\mbox{\scriptsize\boldmath$\theta$}_{\mathcal{S}}} \left[  E_{\mbox{\scriptsize\boldmath$\theta$}_{\bar{\mathcal{S}}}} \left[ P(\mbox{\boldmath $y$}|\mbox{\boldmath $x$}_{(\mbox{\scriptsize\boldmath$w$},\mbox{\scriptsize\boldmath$r$})},P_{Y|\mbox{\scriptsize\boldmath$X$}})^{1-s_1} \right] e^{-Ns_1 \tau_{(\mbox{\tiny\boldmath$r$}, P_{Y|\mbox{\tiny\boldmath$X$}},\mathcal{S})}(\mbox{\scriptsize\boldmath$y$}) } \right].
\end{eqnarray}

{\bf Step III:} Upper-Bounding $P_{i[(\tilde{\mbox{\scriptsize\boldmath$r$}},\tilde{P}_{Y|\mbox{\tiny\boldmath$X$}}),(\mbox{\scriptsize\boldmath$r$},P_{Y|\mbox{\tiny\boldmath$X$}}),\mathcal{S}]}$

Given $\tilde{\mbox{\boldmath$r$}} \notin \mathcal{R}$ and $\mbox{\boldmath$r$} \in \mathcal{R}$, we rewrite $P_{i[(\tilde{\mbox{\scriptsize\boldmath$r$}},\tilde{P}_{Y|\mbox{\tiny\boldmath$X$}}),(\mbox{\scriptsize\boldmath$r$},P_{Y|\mbox{\tiny\boldmath$X$}}),\mathcal{S}]}$
as
\begin{eqnarray}
\label{P_iBound1}
P_{i[(\tilde{\mbox{\scriptsize\boldmath$r$}},\tilde{P}_{Y|\mbox{\tiny\boldmath$X$}}),(\mbox{\scriptsize\boldmath$r$},P_{Y|\mbox{\tiny\boldmath$X$}}),\mathcal{S}]}
= E_{\theta} \left[\sum_{\mbox{\scriptsize\boldmath$y$}} P(\mbox{\boldmath $y$}|\mbox{\boldmath $x$}_{(\tilde{\mbox{\scriptsize\boldmath$w$}},\tilde{\mbox{\scriptsize\boldmath$r$}})},\tilde{P}_{Y|\mbox{\scriptsize\boldmath$X$}}) \phi_{[(\tilde{\mbox{\scriptsize\boldmath$r$}},\tilde{P}_{Y|\mbox{\tiny\boldmath$X$}}),(\mbox{\scriptsize\boldmath$r$},P_{Y|\mbox{\tiny\boldmath$X$}}),\mathcal{S}]} (\mbox{\boldmath$y$}) \right],
\end{eqnarray}
where $\phi_{[(\tilde{\mbox{\scriptsize\boldmath$r$}},\tilde{P}_{Y|\mbox{\tiny\boldmath$X$}}),(\mbox{\scriptsize\boldmath$r$},P_{Y|\mbox{\tiny\boldmath$X$}}),\mathcal{S}]} (\mbox{\boldmath$y$}) = 1$ if there exists a triplet $(\mbox{\boldmath$w$},\mbox{\boldmath$r$},P_{Y|\mbox{\scriptsize\boldmath$X$}})$ with $(\mbox{\boldmath$r$}, P_{Y|\mbox{\scriptsize\boldmath$X$}}) \in \mathcal{R}$, $(\mbox{\boldmath$w$}_{\mathcal{S}},\mbox{\boldmath$r$}_{\mathcal{S}}) = (\tilde{\mbox{\boldmath$w$}}_{\mathcal{S}},\tilde{\mbox{\boldmath$r$}}_{\mathcal{S}})$, and $(w_k,r_k) \neq (\tilde{w}_k, \tilde{r}_k)$ for all $k\notin \mathcal{S}$, such that $P(\mbox{\boldmath $y$}|\mbox{\boldmath $x$}_{(\mbox{\scriptsize\boldmath$w$},\mbox{\scriptsize\boldmath$r$})},P_{Y|\mbox{\scriptsize\boldmath$X$}}) > e^{-N\tau_{(\mbox{\tiny\boldmath$r$}, P_{Y|\mbox{\tiny\boldmath$X$}},\mathcal{S})}(\mbox{\scriptsize\boldmath$y$})}$ is satisfied. Otherwise, $\phi_{[(\tilde{\mbox{\scriptsize\boldmath$r$}},\tilde{P}_{Y|\mbox{\tiny\boldmath$X$}}),(\mbox{\scriptsize\boldmath$r$},P_{Y|\mbox{\tiny\boldmath$X$}}),\mathcal{S}]} (\mbox{\boldmath$y$}) = 0$.

For any $s_2 >0$ and $\tilde{\rho} > 0$, $\phi_{[(\tilde{\mbox{\scriptsize\boldmath$r$}},\tilde{P}_{Y|\mbox{\tiny\boldmath$X$}}),(\mbox{\scriptsize\boldmath$r$},P_{Y|\mbox{\tiny\boldmath$X$}}),\mathcal{S}]} (\mbox{\boldmath$y$})$
 can be bounded by,
\begin{eqnarray}
\label{PhiBoundPiCC}
\phi_{[(\tilde{\mbox{\scriptsize\boldmath$r$}},\tilde{P}_{Y|\mbox{\tiny\boldmath$X$}}),(\mbox{\scriptsize\boldmath$r$},P_{Y|\mbox{\tiny\boldmath$X$}}),\mathcal{S}]} (\mbox{\boldmath$y$}) \le
\left[ \frac{\sum_{\mbox{\scriptsize\boldmath$w$},(\mbox{\scriptsize\boldmath$w$}_{\mathcal{S}},\mbox{\scriptsize\boldmath$r$}_{\mathcal{S}})=(\tilde{\mbox{\scriptsize\boldmath$w$}}_{\mathcal{S}},\tilde{\mbox{\scriptsize\boldmath$r$}}_{\mathcal{S}}),(w_k,r_k) \neq (\tilde{w}_k, \tilde{r}_k) \forall k \notin \mathcal{S}} P(\mbox{\boldmath $y$}|\mbox{\boldmath $x$}_{(\mbox{\scriptsize\boldmath$w$},\mbox{\scriptsize\boldmath$r$})},P_{Y|\mbox{\scriptsize\boldmath$X$}})^{\frac{s_2}{\tilde{\rho}}}}
{e^{-N \frac{s_2} {\tilde{\rho}} \tau_{(\mbox{\tiny\boldmath$r$}, P_{Y|\mbox{\tiny\boldmath$X$}},\mathcal{S})}(\mbox{\scriptsize\boldmath$y$})}}\right] ^{\tilde{\rho}}.
\end{eqnarray}
Substituting (\ref{PhiBoundPiCC}) into (\ref{P_iBound1}) yields,
\begin{eqnarray}
\label{P_iBound2}
P_{i[(\tilde{\mbox{\scriptsize\boldmath$r$}},\tilde{P}_{Y|\mbox{\tiny\boldmath$X$}}),(\mbox{\scriptsize\boldmath$r$},P_{Y|\mbox{\tiny\boldmath$X$}}),\mathcal{S}]}
&\le& \sum_{\mbox{\scriptsize\boldmath$y$}}  E_{\theta} \left[P(\mbox{\boldmath $y$}|\mbox{\boldmath $x$}_{(\tilde{\mbox{\scriptsize\boldmath$w$}},\tilde{\mbox{\scriptsize\boldmath$r$}})},\tilde{P}_{Y|\mbox{\scriptsize\boldmath$X$}})e^{N s_2\tau_{(\mbox{\tiny\boldmath$r$}, P_{Y|\mbox{\tiny\boldmath$X$}},\mathcal{S})}(\mbox{\scriptsize\boldmath$y$})} \right.\nonumber\\
&&\quad \left. \times \left[ \sum_{\mbox{\scriptsize\boldmath$w$},(\mbox{\scriptsize\boldmath$w$}_{\mathcal{S}},\mbox{\scriptsize\boldmath$r$}_{\mathcal{S}})=(\tilde{\mbox{\scriptsize\boldmath$w$}}_{\mathcal{S}},\tilde{\mbox{\scriptsize\boldmath$r$}}_{\mathcal{S}}) } P(\mbox{\boldmath $y$}|\mbox{\boldmath $x$}_{(\mbox{\scriptsize\boldmath$w$},\mbox{\scriptsize\boldmath$r$})},P_{Y|\mbox{\scriptsize\boldmath$X$}})^{\frac{s_2}{\tilde{\rho}}}
\right] ^{\tilde{\rho}} \right].
\end{eqnarray}
The independence between $(\mbox{\boldmath$w$}_{\bar{\mathcal{S}}},\mbox{\boldmath$r$}_{\bar{\mathcal{S}}})$ and $(\tilde{\mbox{\boldmath$w$}}_{\bar{\mathcal{S}}},\tilde{\mbox{\boldmath$r$}}_{\bar{\mathcal{S}}})$ allows us to rewrite the above bound as
\begin{eqnarray}
\label{P_iBound3}
P_{i[(\tilde{\mbox{\scriptsize\boldmath$r$}},\tilde{P}_{Y|\mbox{\tiny\boldmath$X$}}),(\mbox{\scriptsize\boldmath$r$},P_{Y|\mbox{\tiny\boldmath$X$}}),\mathcal{S}]}
&\le& \sum_{\mbox{\scriptsize\boldmath$y$}}  E_{\theta_{\mathcal{S}}} \left[ E_{\theta_{\bar{\mathcal{S}}}} \left[ P(\mbox{\boldmath $y$}|\mbox{\boldmath $x$}_{(\tilde{\mbox{\scriptsize\boldmath$w$}},\tilde{\mbox{\scriptsize\boldmath$r$}})},\tilde{P}_{Y|\mbox{\scriptsize\boldmath$X$}}) \right] e^{N s_2\tau_{(\mbox{\tiny\boldmath$r$}, P_{Y|\mbox{\tiny\boldmath$X$}},\mathcal{S})}(\mbox{\scriptsize\boldmath$y$})} \right.\nonumber\\
&&\quad \left. \times E_{\theta_{\bar{\mathcal{S}}}} \left[  \left[ \sum_{\mbox{\scriptsize\boldmath$w$},(\mbox{\scriptsize\boldmath$w$}_{\mathcal{S}},\mbox{\scriptsize\boldmath$r$}_{\mathcal{S}})=(\tilde{\mbox{\scriptsize\boldmath$w$}}_{\mathcal{S}},\tilde{\mbox{\scriptsize\boldmath$r$}}_{\mathcal{S}})} P(\mbox{\boldmath $y$}|\mbox{\boldmath $x$}_{(\mbox{\scriptsize\boldmath$w$},\mbox{\scriptsize\boldmath$r$})},P_{Y|\mbox{\scriptsize\boldmath$X$}})^{\frac{s_2}{\tilde{\rho}}}
\right] ^{\tilde{\rho}}\right] \right].
\end{eqnarray}

With the assumption of $0 < \tilde{\rho} \le 1$, the inequality in (\ref{P_iBound3}) becomes
\begin{eqnarray}
\label{MCBound4P_i}
P_{i[(\tilde{\mbox{\scriptsize\boldmath$r$}},\tilde{P}_{Y|\mbox{\tiny\boldmath$X$}}),(\mbox{\scriptsize\boldmath$r$},P_{Y|\mbox{\tiny\boldmath$X$}}),\mathcal{S}]} &\le& \sum_{\mbox{\scriptsize\boldmath$y$}} E_{\mbox{\scriptsize\boldmath$\theta$}_{\mathcal{S}}} \left[  E_{\mbox{\scriptsize\boldmath$\theta$}_{\bar{\mathcal{S}}}} \left[ P(\mbox{\boldmath$y$}|\mbox{\boldmath$x$}_{(\tilde{\mbox{\scriptsize\boldmath$w$}},\tilde{\mbox{\scriptsize\boldmath$r$}})},\tilde{P}_{Y|\mbox{\scriptsize\boldmath$X$}}) \right] \right. \nonumber\\
&&\left. \times E_{\mbox{\scriptsize\boldmath$\theta$}_{\bar{\mathcal{S}}}} \left\{ \left[ P(\mbox{\boldmath$y$}|\mbox{\boldmath$x$}_{(\mbox{\scriptsize\boldmath$w$},\mbox{\scriptsize\boldmath$r$})},P_{Y|\mbox{\scriptsize\boldmath$X$}})^{\frac{s_2}{\tilde{\rho}}} \right] \right\}^{\tilde{\rho}} e^{Ns_2 \tau_{(\mbox{\tiny\boldmath$r$}, P_{Y|\mbox{\tiny\boldmath$X$}},\mathcal{S})}(\mbox{\scriptsize\boldmath$y$}) } e^{N\tilde{\rho}\sum_{k\notin\mathcal{S}}r_k} \right]  \nonumber\\
&\le& \max_{(\mbox{\scriptsize\boldmath$r$}', P'_{Y|\mbox{\tiny\boldmath$X$}}) \notin \mathcal{R},\mbox{\scriptsize\boldmath$r$}'_{\mathcal{S}} = \mbox{\scriptsize\boldmath$r$}_{\mathcal{S}}} \sum_{\mbox{\scriptsize\boldmath$y$}} E_{\mbox{\scriptsize\boldmath$\theta$}_{\mathcal{S}}} \left[  E_{\mbox{\scriptsize\boldmath$\theta$}_{\bar{\mathcal{S}}}} \left[ P(\mbox{\boldmath$y$}|\mbox{\boldmath$x$}_{(\mbox{\scriptsize\boldmath$w$}',\mbox{\scriptsize\boldmath$r$}')},P'_{Y|\mbox{\scriptsize\boldmath$X$}}) \right] \right. \nonumber\\
&&\left. \times E_{\mbox{\scriptsize\boldmath$\theta$}_{\bar{\mathcal{S}}}} \left\{ \left[ P(\mbox{\boldmath$y$}|\mbox{\boldmath$x$}_{(\mbox{\scriptsize\boldmath$w$},\mbox{\scriptsize\boldmath$r$})},P_{Y|\mbox{\scriptsize\boldmath$X$}})^{\frac{s_2}{\tilde{\rho}}} \right] \right\}^{\tilde{\rho}} e^{Ns_2 \tau_{(\mbox{\tiny\boldmath$r$}, P_{Y|\mbox{\tiny\boldmath$X$}},\mathcal{S})}(\mbox{\scriptsize\boldmath$y$}) } e^{N\tilde{\rho}\sum_{k\notin\mathcal{S}}r_k} \right].
\end{eqnarray}
Note that the upper bound in (\ref{MCBound4P_i}) is no longer a function of $(\tilde{\mbox{\boldmath$r$}}_{\bar{\mathcal{S}}},\tilde{P}_{Y|\mbox{\scriptsize\boldmath$X$}})$.

{\bf Step IV: } Choosing $\tau_{(\mbox{\tiny\boldmath$r$}, P_{Y|\mbox{\tiny\boldmath$X$}},\mathcal{S})}(\mbox{\boldmath$y$})$

The value of $\tau_{(\mbox{\tiny\boldmath$r$}, P_{Y|\mbox{\tiny\boldmath$X$}},\mathcal{S})}(\mbox{\boldmath$y$})$ can be determined by jointly optimizing the bounds in (\ref{MCBound4P_t}) and (\ref{MCBound4P_i}). Consequently, given $(\mbox{\boldmath$r$}, P_{Y|\mbox{\scriptsize\boldmath$X$}}) \in \mathcal{R}$, $\mbox{\boldmath$y$}$ and auxiliary variables $s_1>0$, $s_2>0$, $0< \tilde{\rho} \le 1$, we choose  $\tau_{(\mbox{\tiny\boldmath$r$}, P_{Y|\mbox{\tiny\boldmath$X$}},\mathcal{S})}(\mbox{\boldmath$y$})$ such that the following equality is satisfied,
\begin{eqnarray}
\label{TauEquality}
&& E_{\mbox{\scriptsize\boldmath$\theta$}_{\bar{\mathcal{S}}}} \left[ P(\mbox{\boldmath $y$}|\mbox{\boldmath $x$}_{(\mbox{\scriptsize\boldmath$w$},\mbox{\scriptsize\boldmath$r$})},P_{Y|\mbox{\scriptsize\boldmath$X$}})^{1-s_1} \right] e^{-Ns_1 \tau_{(\mbox{\tiny\boldmath$r$}, P_{Y|\mbox{\tiny\boldmath$X$}},\mathcal{S})}(\mbox{\scriptsize\boldmath$y$}) } \nonumber\\
&& = E_{\mbox{\scriptsize\boldmath$\theta$}_{\bar{\mathcal{S}}}} \left[ P(\mbox{\boldmath$y$}|\mbox{\boldmath$x$}_{(\tilde{\mbox{\scriptsize\boldmath$w$}}^*,\tilde{\mbox{\scriptsize\boldmath$r$}}^*)},\tilde{P}^*_{Y|\mbox{\scriptsize\boldmath$X$}}) \right]\times E_{\mbox{\scriptsize\boldmath$\theta$}_{\bar{\mathcal{S}}}} \left\{ \left[ P(\mbox{\boldmath$y$}|\mbox{\boldmath$x$}_{(\mbox{\scriptsize\boldmath$w$},\mbox{\scriptsize\boldmath$r$})},P_{Y|\mbox{\scriptsize\boldmath$X$}})^{\frac{s_2}{\tilde{\rho}}} \right] \right\}^{\tilde{\rho}} e^{Ns_2 \tau_{(\mbox{\tiny\boldmath$r$}, P_{Y|\mbox{\tiny\boldmath$X$}},\mathcal{S})}(\mbox{\scriptsize\boldmath$y$}) } e^{N\tilde{\rho}\sum_{k\notin\mathcal{S}}r_k} .
\end{eqnarray}
where $(\tilde{\mbox{\boldmath$r$}}^* ,\tilde{P}_{Y|\mbox{\scriptsize\boldmath$X$}}^*)$ is defined as\footnote{Although the notation of $\tilde{\mbox{\boldmath$w$}}^*$ is used in (\ref{TauEquality}), the result is actually invariant to any choice of the message vector.}
\begin{eqnarray}
\label{DefineR}
(\tilde{\mbox{\boldmath$r$}}^* ,\tilde{P}_{Y|\mbox{\scriptsize\boldmath$X$}}^*) &= & \mathop{\mbox{argmax}}_{(\mbox{\scriptsize\boldmath$r$}', P'_{Y|\mbox{\tiny\boldmath$X$}}) \notin \mathcal{R},\mbox{\scriptsize\boldmath$r$}'_{\mathcal{S}} = \mbox{\scriptsize\boldmath$r$}_{\mathcal{S}}} \sum_{\mbox{\scriptsize\boldmath$y$}} E_{\mbox{\scriptsize\boldmath$\theta$}_{\mathcal{S}}} \left[  E_{\mbox{\scriptsize\boldmath$\theta$}_{\bar{\mathcal{S}}}} \left[ P(\mbox{\boldmath$y$}|\mbox{\boldmath$x$}_{(\mbox{\scriptsize\boldmath$w$}',\mbox{\scriptsize\boldmath$r$}')},P'_{Y|\mbox{\scriptsize\boldmath$X$}}) \right] \right. \nonumber\\
&& \left. \times E_{\mbox{\scriptsize\boldmath$\theta$}_{\bar{\mathcal{S}}}} \left\{ \left[ P(\mbox{\boldmath$y$}|\mbox{\boldmath$x$}_{(\mbox{\scriptsize\boldmath$w$},\mbox{\scriptsize\boldmath$r$})},P_{Y|\mbox{\scriptsize\boldmath$X$}})^{\frac{s_2}{\tilde{\rho}}} \right] \right\}^{\tilde{\rho}} e^{Ns_2 \tau_{(\mbox{\tiny\boldmath$r$}, P_{Y|\mbox{\tiny\boldmath$X$}},\mathcal{S})}(\mbox{\scriptsize\boldmath$y$}) } e^{N\tilde{\rho}\sum_{k\notin\mathcal{S}}r_k} \right].
\end{eqnarray}
Finding a solution for (\ref{TauEquality}) is always possible since that the left hand side of (\ref{TauEquality}) decreases with $\tau_{(\mbox{\tiny\boldmath$r$}, P_{Y|\mbox{\tiny\boldmath$X$}},\mathcal{S})}(\mbox{\boldmath$y$})$, while the right hand side of (\ref{TauEquality}) increases with $\tau_{(\mbox{\tiny\boldmath$r$}, P_{Y|\mbox{\tiny\boldmath$X$}},\mathcal{S})}(\mbox{\boldmath$y$})$. This yields the desired typicality threshold, denoted by $\tau_{(\mbox{\scriptsize\boldmath$r$},P_{Y|\mbox{\tiny\boldmath$X$}},\mathcal{S})}^*(\mbox{\scriptsize\boldmath$y$})$, which gives
\begin{eqnarray}
\label{TauExpression}
&& e^{-N\tau_{(\mbox{\scriptsize\boldmath$r$},P_{Y|\mbox{\tiny\boldmath$X$}},\mathcal{S})}^*(\mbox{\scriptsize\boldmath$y$})} =
\frac{\left\{E_{\mbox{\scriptsize\boldmath$\theta$}_{\bar{\mathcal{S}}}} \left[ P(\mbox{\boldmath$y$}|\mbox{\boldmath$x$}_{(\tilde{\mbox{\scriptsize\boldmath$w$}}^*,\tilde{\mbox{\scriptsize\boldmath$r$}}^*)},\tilde{P}^*_{Y|\mbox{\scriptsize\boldmath$X$}}) \right]\right\}^{\frac{1}{s_1+s_2}}E_{\mbox{\scriptsize\boldmath$\theta$}_{\bar{\mathcal{S}}}} \left\{ \left[ P(\mbox{\boldmath$y$}|\mbox{\boldmath$x$}_{(\mbox{\scriptsize\boldmath$w$},\mbox{\scriptsize\boldmath$r$})},P_{Y|\mbox{\scriptsize\boldmath$X$}})^{\frac{s_2}{\tilde{\rho}}} \right] \right\}^{\frac{\tilde{\rho}}{s_1+s_2}} }         {\left\{  E_{\mbox{\scriptsize\boldmath$\theta$}_{\bar{\mathcal{S}}}} \left[ P(\mbox{\boldmath $y$}|\mbox{\boldmath $x$}_{(\mbox{\scriptsize\boldmath$w$},\mbox{\scriptsize\boldmath$r$})},P_{Y|\mbox{\scriptsize\boldmath$X$}})^{1-s_1} \right] \right\} ^{\frac{1}{s_1+s_2}}}\nonumber\\
&& \qquad\qquad\qquad\qquad\qquad\qquad\qquad\qquad\qquad\qquad\qquad\qquad\qquad \times e^{N\frac{\tilde{\rho}}{s_1+s_2}\sum_{k\notin\mathcal{S}}r_k}.
\end{eqnarray}

Substituting (\ref{TauExpression}) into (\ref{MCBound4P_t}), we get
\begin{eqnarray}
\label{MCBound4P_tA}
P_{t[\mbox{\scriptsize\boldmath$r$},P_{Y|\mbox{\tiny\boldmath$X$}},\mathcal{S}]} &\le & \sum_{\mbox{\scriptsize\boldmath$y$}} E_{\mbox{\scriptsize\boldmath$\theta$}_{\mathcal{S}}} \left[  E_{\mbox{\scriptsize\boldmath$\theta$}_{\bar{\mathcal{S}}}} \left[ P(\mbox{\boldmath $y$}|\mbox{\boldmath $x$}_{(\mbox{\scriptsize\boldmath$w$},\mbox{\scriptsize\boldmath$r$})},P_{Y|\mbox{\scriptsize\boldmath$X$}})^{1-s_1} \right]^{\frac{s_2}{s_1+s_2}} \left\{E_{\mbox{\scriptsize\boldmath$\theta$}_{\bar{\mathcal{S}}}} \left[ P(\mbox{\boldmath$y$}|\mbox{\boldmath$x$}_{(\tilde{\mbox{\scriptsize\boldmath$w$}}^*,\tilde{\mbox{\scriptsize\boldmath$r$}}^*)},\tilde{P}^*_{Y|\mbox{\scriptsize\boldmath$X$}}) \right]\right\}^{\frac{s_1}{s_1+s_2}}\right.\nonumber\\
&& \left.\times E_{\mbox{\scriptsize\boldmath$\theta$}_{\bar{\mathcal{S}}}} \left\{ \left[ P(\mbox{\boldmath$y$}|\mbox{\boldmath$x$}_{(\mbox{\scriptsize\boldmath$w$},\mbox{\scriptsize\boldmath$r$})},P_{Y|\mbox{\scriptsize\boldmath$X$}})^{\frac{s_2}{\tilde{\rho}}} \right] \right\}^{\frac{s_1\tilde{\rho}}{s_1+s_2}}  e^{N\frac{s_1\tilde{\rho}}{s_1+s_2}\sum_{k\notin\mathcal{S}}r_k} \right].
\end{eqnarray}
Let $s_2 < \tilde{\rho}$ and $s_1 = 1 - \frac{s_2}{\tilde{\rho}}$, and then do a variable change with $\rho = \frac{\tilde{\rho}(\tilde{\rho}-s_2)}{\tilde{\rho} - (1-\tilde{\rho})s_2}$ and $s = 1- \frac{\tilde{\rho}-s_2}{\tilde{\rho} - (1-\tilde{\rho})s_2}$. Consequently, inequality (\ref{MCBound4P_tA}) becomes,
\begin{eqnarray}
\label{MCBound4P_tB}
&&P_{t[\mbox{\scriptsize\boldmath$r$},P_{Y|\mbox{\tiny\boldmath$X$}},\mathcal{S}]}  \le  \max_{(\mbox{\scriptsize\boldmath$r$}', P'_{Y|\mbox{\tiny\boldmath$X$}}) \notin \mathcal{R},\mbox{\scriptsize\boldmath$r$}'_{\mathcal{S}} = \mbox{\scriptsize\boldmath$r$}_{\mathcal{S}}} e^{N\rho \sum_{k\not\in \mathcal{S}}r_k} \left\{ \sum_Y \sum_{\mbox{\scriptsize \boldmath $X$}_{\mathcal{S}}} \prod_{k\in \mathcal{S}} P_{X|r_k}(X_k)\right.               \nonumber \\
&&\qquad\qquad \left. \times \left(\sum_{\mbox{\scriptsize \boldmath $X$}_{\bar{\mathcal{S}}}}\prod_{k \not\in \mathcal{S}}P_{X|r_k}(X_k)P_{Y|\mbox{\scriptsize\boldmath$X$}}(Y|\mbox{\boldmath $X$})^{\frac{s}{s+\rho}} \right)^{s+\rho}\left(\sum_{\mbox{\scriptsize \boldmath $X$}_{\bar{\mathcal{S}}}}\prod_{k \not\in \mathcal{S}}P_{X|r'_k}(X_k)P'_{Y|\mbox{\scriptsize\boldmath$X$}}(Y|\mbox{\boldmath $X$})\right)^{1-s}\right\}^N.
\end{eqnarray}
Similarly, we can obtain the same upper bound for $P_{i[(\tilde{\mbox{\scriptsize\boldmath$r$}},\tilde{P}_{Y|\mbox{\tiny\boldmath$X$}}),(\mbox{\scriptsize\boldmath$r$},P_{Y|\mbox{\tiny\boldmath$X$}}),\mathcal{S}]}$ as given at the right hand side of (\ref{MCBound4P_tB}). Since (\ref{MCBound4P_tB}) holds for all $0<\rho\le 1$ and $0<s\le 1-\rho$, we have
\begin{eqnarray}
\label{BoundPiPt}
P_{t[\mbox{\scriptsize\boldmath$r$},P_{Y|\mbox{\tiny\boldmath$X$}},\mathcal{S}]}, P_{i[(\tilde{\mbox{\scriptsize\boldmath$r$}},\tilde{P}_{Y|\mbox{\tiny\boldmath$X$}}),(\mbox{\scriptsize\boldmath$r$},P_{Y|\mbox{\tiny\boldmath$X$}}),\mathcal{S}]} \le \max_{(\mbox{\scriptsize\boldmath$r$}', P'_{Y|\mbox{\tiny\boldmath$X$}}) \notin \mathcal{R},\mbox{\scriptsize\boldmath$r$}'_{\mathcal{S}} = \mbox{\scriptsize\boldmath$r$}_{\mathcal{S}}} \exp \left\{ -NE_i(\mathcal{S}, \mbox{\boldmath$r$},\mbox{\boldmath $r$}', P_{Y|\mbox{\scriptsize\boldmath$X$}}, P'_{Y|\mbox{\scriptsize\boldmath$X$}} ) \right\},
\end{eqnarray}
where $E_i(\mathcal{S}, \mbox{\boldmath$r$},\mbox{\boldmath $r$}', P_{Y|\mbox{\scriptsize\boldmath$X$}}, P'_{Y|\mbox{\scriptsize\boldmath$X$}} ) $ is given in (\ref{EmEiMultiMC}).

By substituting (\ref{PmBound4}) and (\ref{BoundPiPt}) into (\ref{ProofPes}), we get the desired result.

\subsection{Proof of Theorem \ref{TheoremContinuousChannel}}
\label{AppendixTheoremContinuousChannel}
We assume that the following decoding algorithm is used at the receiver. Given the channel output sequence $\mbox{\boldmath$y$}$, the receiver outputs a message and rate vector pair $(\mbox{\boldmath$w$},\mbox{\boldmath$r$})$ together with a channel class $\mathcal{F}$ such that $(\mbox{\boldmath$r$},\mathcal{F}) \in \mathcal{R}$ if for all user subset $\mathcal{S} \subset \{1,\cdots,K\}$, the following condition is satisfied,
\begin{eqnarray}
\label{RevisedCriterionCC}
&& -\frac{1}{N} \log Pr \left\{ \mbox{\boldmath$y$} | \mbox{\boldmath$x$}_{(\mbox{\scriptsize\boldmath$w$},\mbox{\scriptsize\boldmath$r$})},P^{\mathcal{F}}_{\min} \right\} <
-\frac{1}{N} \log Pr \left\{ \mbox{\boldmath$y$} | \mbox{\boldmath$x$}_{(\tilde{\mbox{\scriptsize\boldmath$w$}},\tilde{\mbox{\scriptsize\boldmath$r$}})},P^{\tilde{\mathcal{F}}}_{\max} \right\},   \nonumber\\
&& \quad \mbox{ for all } (\tilde{\mbox{\boldmath$w$}},\tilde{\mbox{\boldmath$r$}},\tilde{\mathcal{F}}), (\tilde{\mbox{\boldmath$w$}}_{\mathcal{S}},\tilde{\mbox{\boldmath$r$}}_{\mathcal{S}}) = (\mbox{\boldmath$w$}_{\mathcal{S}}, \mbox{\boldmath$r$}_{\mathcal{S}}),(\tilde{w}_{k},\tilde{r}_k)\neq (w_k,r_k), \forall k \notin \mathcal{S}, \nonumber\\
&& \quad\mbox{ and }(\mbox{\boldmath$w$},\mbox{\boldmath$r$},P^{\mathcal{F}}_{\min}), (\tilde{\mbox{\boldmath$w$}},\tilde{\mbox{\boldmath$r$}},P^{\tilde{\mathcal{F}}}_{\max}) \in \mathcal{R}_{(\mathcal{S},\mbox{\scriptsize\boldmath$y$})},  \mbox{ with }\nonumber\\
&& \mathcal{R}_{(\mathcal{S},\mbox{\scriptsize\boldmath$y$})} = \left\{ (\tilde{\mbox{\boldmath$w$}},\tilde{\mbox{\boldmath$r$}},P^{\tilde{\mathcal{F}}}) | (\tilde{\mbox{\boldmath$r$}},\tilde{\mathcal{F}}) \in \mathcal{R}, -\frac{1}{N} Pr \left\{ \mbox{\boldmath$y$} | \mbox{\boldmath$x$}_{(\tilde{\mbox{\scriptsize\boldmath$w$}},\tilde{\mbox{\scriptsize\boldmath$r$}})},P^{\tilde{\mathcal{F}}} \right\} < \tau_{(\tilde{\mbox{\scriptsize\boldmath$r$}},P^{\tilde{\mathcal{F}}},\mathcal{S})}(\mbox{\boldmath$y$}) \right\},
\end{eqnarray}
where $\tau_{(\tilde{\mbox{\scriptsize\boldmath$r$}}, P^{\tilde{\mathcal{F}}},\mathcal{S})}(\cdot)$ is the typicality threshold function. Again, we will first analyze the error performance for each individual $\mathcal{S}$ and then derive the overall error performance by taking the union over all $\mathcal{S}$.

For a given user subset $\mathcal{S} \subset \{1,\cdots,K\}$, the following probability terms are defined.

First, assume that $(\mbox{\boldmath$w$},\mbox{\boldmath$r$})$ is transmitted over channel $P_{Y|\mbox{\scriptsize\boldmath$X$}} \in \mathcal{F}$, with $(\mbox{\boldmath$r$},\mathcal{F}) \in \mathcal{R}$. Let $P_{t[\mbox{\scriptsize\boldmath$r$}, \mathcal{F}, P_{Y|\mbox{\tiny\boldmath$X$}}, \mathcal{S}]}$ be the probability that the likelihood value of the transmitted codeword vector calculated using $P^{\mathcal{F}}_{\min}$ is no larger than the corresponding typicality threshold,
\begin{eqnarray}
\label{ProofPtMC}
P_{t[\mbox{\scriptsize\boldmath$r$},\mathcal{F},P_{Y|\mbox{\tiny\boldmath$X$}},\mathcal{S}]} = Pr \left\{ P(\mbox{\boldmath $y$}|\mbox{\boldmath $x$}_{(\mbox{\scriptsize\boldmath$w$},\mbox{\scriptsize\boldmath$r$})},P^{\mathcal{F}}_{\min}) \le e^{-N\tau_{(\mbox{\tiny\boldmath$r$}, P^{\mathcal{F}}_{\min},\mathcal{S})}(\mbox{\scriptsize\boldmath$y$})}\right\}.
\end{eqnarray}
Define $P_{m[(\mbox{\scriptsize\boldmath$r$},\mathcal{F}),(\tilde{\mbox{\scriptsize\boldmath$r$}},\tilde{\mathcal{F}}), P_{Y|\mbox{\tiny\boldmath$X$}}, \mathcal{S}]}$ as the probability that the likelihood value of the transmitted codeword vector calculated using $P^{\mathcal{F}}_{\min}$ is no larger than that of another codeword $(\mbox{\boldmath$w$},\mbox{\boldmath$r$})$ with $(\tilde{\mbox{\boldmath$w$}}_{\mathcal{S}},\tilde{\mbox{\boldmath$r$}}_{\mathcal{S}}) = (\mbox{\boldmath$w$}_{\mathcal{S}}, \mbox{\boldmath$r$}_{\mathcal{S}}),(\tilde{w}_{k},\tilde{r}_{k})\neq (w_k,r_k), \forall k \notin \mathcal{S}$, calculated using $P^{\tilde{\mathcal{F}}}_{\max}$ with $(\tilde{\mbox{\boldmath$r$}},\tilde{\mathcal{F}}) \in \mathcal{R}$,
\begin{eqnarray}
\label{ProofPmCC}
&& P_{m[(\mbox{\scriptsize\boldmath$r$},\mathcal{F}),(\tilde{\mbox{\scriptsize\boldmath$r$}},\tilde{\mathcal{F}}), P_{Y|\mbox{\tiny\boldmath$X$}}, \mathcal{S}]} = Pr \left\{ P(\mbox{\boldmath $y$}|\mbox{\boldmath $x$}_{(\mbox{\scriptsize\boldmath$w$},\mbox{\scriptsize\boldmath$r$})},P^{\mathcal{F}}_{\min}) \le P(\mbox{\boldmath $y$}|\mbox{\boldmath $x$}_{(\tilde{\mbox{\scriptsize\boldmath$w$}}, \tilde{\mbox{\scriptsize\boldmath$r$}})},P^{\tilde{\mathcal{F}}}_{\max})\right\}\nonumber\\
&& \quad (\tilde{\mbox{\boldmath$w$}},\tilde{\mbox{\boldmath$r$}},\tilde{\mathcal{F}}),(\tilde{\mbox{\boldmath$r$}},\tilde{\mathcal{F}}) \in \mathcal{R}, (\tilde{\mbox{\boldmath$w$}}_{\mathcal{S}},\tilde{\mbox{\boldmath$r$}}_{\mathcal{S}}) = (\mbox{\boldmath$w$}_{\mathcal{S}}, \mbox{\boldmath$r$}_{\mathcal{S}}),(\tilde{w}_{k},\tilde{r}_k)\neq (w_k,r_k), \forall k \notin \mathcal{S}.
\end{eqnarray}

Second, assume that $(\tilde{\mbox{\boldmath$w$}},\tilde{\mbox{\boldmath$r$}})$ is transmitted over channel $\tilde{P}_{Y|\mbox{\scriptsize\boldmath$X$}} \in \tilde{\mathcal{F}}$, with $(\tilde{\mbox{\boldmath$r$}},\tilde{\mathcal{F}}) \notin \mathcal{R}$. Define $P_{i[(\tilde{\mbox{\scriptsize\boldmath$r$}},\tilde{\mathcal{F}}),(\mbox{\scriptsize\boldmath$r$},\mathcal{F}), \tilde{P}_{Y|\mbox{\tiny\boldmath$X$}}, \mathcal{S}]}$  as the probability that the decoder finds a codeword $(\mbox{\boldmath$w$},\mbox{\boldmath$r$})$ with $ (\mbox{\boldmath$w$}_{\mathcal{S}}, \mbox{\boldmath$r$}_{\mathcal{S}})= (\tilde{\mbox{\boldmath$w$}}_{\mathcal{S}},\tilde{\mbox{\boldmath$r$}}_{\mathcal{S}}), (w_k,r_k) \neq (\tilde{w}_{k},\tilde{r}_k), \forall k \notin \mathcal{S} $, over channel class $\mathcal{F}$ with $(\mbox{\boldmath$r$}, \mathcal{F}) \in \mathcal{R}$, such that its likelihood value calculated using $P^{\mathcal{F}}_{\min}$ is larger than the corresponding typicality threshold,
\begin{eqnarray}
\label{ProofPiMC}
&& P_{i[(\tilde{\mbox{\scriptsize\boldmath$r$}},\tilde{\mathcal{F}}),(\mbox{\scriptsize\boldmath$r$},\mathcal{F}),\tilde{P}_{Y|\mbox{\tiny\boldmath$X$}}, \mathcal{S}]} = Pr \left\{ P(\mbox{\boldmath $y$}|\mbox{\boldmath $x$}_{(\mbox{\scriptsize\boldmath$w$},\mbox{\scriptsize\boldmath$r$})},P^{\mathcal{F}}_{\min}) > e^{-N\tau_{(\mbox{\tiny\boldmath$r$}, P^{\mathcal{F}}_{\min},\mathcal{S})}(\mbox{\scriptsize\boldmath$y$})}\right\},\nonumber\\
&& \quad (\mbox{\boldmath$w$},\mbox{\boldmath$r$}, \mathcal{F}),(\mbox{\boldmath$r$}, \mathcal{F}) \in \mathcal{R}, (\mbox{\boldmath$w$}_{\mathcal{S}}, \mbox{\boldmath$r$}_{\mathcal{S}})= (\tilde{\mbox{\boldmath$w$}}_{\mathcal{S}},\tilde{\mbox{\boldmath$r$}}_{\mathcal{S}}), (w_k,r_k) \neq (\tilde{w}_{k},\tilde{r}_k), \forall k \notin \mathcal{S}.
\end{eqnarray}

Consequently, the system error probability $P_{es}$ can be upper-bounded using the above probabilities terms as follows,
\begin{eqnarray}
\label{ProofPesCC}
P_{es} \le \max \left\{ \begin{array}{l} \max_{(\mbox{\scriptsize\boldmath$r$},P_{Y|\mbox{\tiny\boldmath$X$}}): P_{Y|\mbox{\tiny\boldmath$X$}} \in \mathcal{F}, (\mbox{\scriptsize\boldmath$r$},\mathcal{F}) \in \mathcal{R}} \sum_{\mathcal{S} \subset \{1,\cdots,K\}} \left[ P_{t[\mbox{\scriptsize\boldmath$r$}, \mathcal{F}, P_{Y|\mbox{\tiny\boldmath$X$}}, \mathcal{S}]} \right. \\
\left. \qquad\qquad\qquad\qquad\qquad\qquad + \sum_{(\tilde{\mbox{\scriptsize\boldmath$r$}},\tilde{\mathcal{F}})  \in \mathcal{R},\tilde{\mbox{\scriptsize\boldmath$r$}}_{\mathcal{S}} = \mbox{\scriptsize\boldmath$r$}_{\mathcal{S}}} P_{m[(\mbox{\scriptsize\boldmath$r$},\mathcal{F}),(\tilde{\mbox{\scriptsize\boldmath$r$}},\tilde{\mathcal{F}}), P_{Y|\mbox{\tiny\boldmath$X$}}, \mathcal{S}]} \right] , \\
\max_{(\tilde{\mbox{\scriptsize\boldmath$r$}},\tilde{P}_{Y|\mbox{\tiny\boldmath$X$}} ): \tilde{P}_{Y|\mbox{\tiny\boldmath$X$}} \in \mathcal{F}, (\tilde{\mbox{\scriptsize\boldmath$r$}},\tilde{\mathcal{F}})  \notin \mathcal{R}} \sum_{\mathcal{S} \subset \{1,\cdots,K\}}
\left[ \sum_{(\mbox{\scriptsize\boldmath$r$},\mathcal{F}) \in \mathcal{R},\mbox{\scriptsize\boldmath$r$}_{\mathcal{S}} = \tilde{\mbox{\scriptsize\boldmath$r$}}_{\mathcal{S}}} P_{i[(\tilde{\mbox{\scriptsize\boldmath$r$}},\tilde{\mathcal{F}}),(\mbox{\scriptsize\boldmath$r$},\mathcal{F}), \tilde{P}_{Y|\mbox{\tiny\boldmath$X$}}, \mathcal{S}]} \right] \end{array}\right\}.
\end{eqnarray}
Note that we have used the union bound over all user subsets $\mathcal{S}$ to obtain the probability bound in (\ref{ProofPesCC}). Next, we will derive individual bound for each of the probability terms on the right hand side of (\ref{ProofPesCC}).

A derivation similar to (\ref{PmBound1})-(\ref{PmBound4}) in Appendix \ref{AppendixTheoremMC} gives the upper bound on $P_{m[(\mbox{\scriptsize\boldmath$r$}, \mathcal{F}),(\tilde{\mbox{\scriptsize\boldmath$r$}},\tilde{\mathcal{F}}), P_{Y|\mbox{\tiny\boldmath$X$}}, \mathcal{S}]}$ as,
\begin{eqnarray}
\label{CCBound4P_m}
P_{m[(\mbox{\scriptsize\boldmath$r$},\mathcal{F}),(\tilde{\mbox{\scriptsize\boldmath$r$}},\tilde{\mathcal{F}}), P_{Y|\mbox{\tiny\boldmath$X$}}, \mathcal{S}]} &\le &
e^{N\rho \sum_{k\not\in \mathcal{S}}\tilde{r}_k} \sum_{\mbox{\scriptsize\boldmath$y$}} E_{\mbox{\scriptsize\boldmath$\theta$}_{\mathcal{S}}}
\left[ E_{\mbox{\scriptsize\boldmath$\theta$}_{\bar{\mathcal{S}}}} \left[ P(\mbox{\boldmath$y$} | \mbox{\boldmath$x$}_{(\mbox{\scriptsize\boldmath$w$},\mbox{\scriptsize\boldmath$r$})}, P_{Y|\mbox{\scriptsize\boldmath$X$}}) P(\mbox{\boldmath$y$} | \mbox{\boldmath$x$}_{(\mbox{\scriptsize\boldmath$w$},\mbox{\scriptsize\boldmath$r$})}, P^{\mathcal{F}}_{\min})^{-s} \right] \right. \nonumber\\
&& \left. \times \left[ E_{\mbox{\scriptsize\boldmath$\theta$}_{\bar{\mathcal{S}}}} \left[ P(\mbox{\boldmath$y$} | \mbox{\boldmath$x$}_{(\tilde{\mbox{\scriptsize\boldmath$w$}},\tilde{\mbox{\scriptsize\boldmath$r$}})}, P^{\tilde{\mathcal{F}}}_{\max})^{\frac{s}{\rho}} \right] \right] ^{\rho} \right]                 \nonumber\\
&\le& \exp \left\{ -NE_m(\mathcal{S}, \mbox{\boldmath$r$},\tilde{\mbox{\boldmath $r$}}, \mathcal{F}, \tilde{\mathcal{F}} ) \right\},
\end{eqnarray}
where $E_m(\mathcal{S}, \mbox{\boldmath$r$},\tilde{\mbox{\boldmath $r$}}, \mathcal{F}, \tilde{\mathcal{F}})$ is given in (\ref{EmEiCC}). Note that the second inequality in (\ref{CCBound4P_m}) is due to the fact that $P(\mbox{\boldmath$y$} | \mbox{\boldmath$x$}_{(\mbox{\scriptsize\boldmath$w$},\mbox{\scriptsize\boldmath$r$})}, P_{Y|\mbox{\scriptsize\boldmath$X$}}) \le P(\mbox{\boldmath$y$} | \mbox{\boldmath$x$}_{(\mbox{\scriptsize\boldmath$w$},\mbox{\scriptsize\boldmath$r$})}, P^{\mathcal{F}}_{\max}) $, and the right hand side of (\ref{CCBound4P_m}) is not a function of $P_{Y|\mbox{\scriptsize\boldmath$X$}}$.

Similarly, by using the same bounding techniques as in (\ref{PtBound1})-(\ref{MCBound4P_t}) and (\ref{P_iBound1})-(\ref{MCBound4P_i}) in Appendix \ref{AppendixTheoremMC}, we can upper bound $P_{t[\mbox{\scriptsize\boldmath$r$},\mathcal{F}, P_{Y|\mbox{\tiny\boldmath$X$}}, \mathcal{S}]} $ for any $s_1 >0$ by,
\begin{eqnarray}
\label{CCBound4P_t}
P_{t[\mbox{\scriptsize\boldmath$r$},\mathcal{F}, P_{Y|\mbox{\tiny\boldmath$X$}}, \mathcal{S}]} &\le & \sum_{\mbox{\scriptsize\boldmath$y$}} E_{\mbox{\scriptsize\boldmath$\theta$}_{\mathcal{S}}} \left[  E_{\mbox{\scriptsize\boldmath$\theta$}_{\bar{\mathcal{S}}}} \left[ P(\mbox{\boldmath $y$}|\mbox{\boldmath $x$}_{(\mbox{\scriptsize\boldmath$w$},\mbox{\scriptsize\boldmath$r$})},P_{Y|\mbox{\scriptsize\boldmath$X$}}) P(\mbox{\boldmath $y$}|\mbox{\boldmath $x$}_{(\mbox{\scriptsize\boldmath$w$},\mbox{\scriptsize\boldmath$r$})},P^{\mathcal{F}}_{\min})^{-s_1} \right] e^{-Ns_1 \tau_{(\mbox{\tiny\boldmath$r$}, P^{\mathcal{F}}_{\min} ,\mathcal{S})}(\mbox{\scriptsize\boldmath$y$}) } \right]            \nonumber\\
&\le& \sum_{\mbox{\scriptsize\boldmath$y$}} E_{\mbox{\scriptsize\boldmath$\theta$}_{\mathcal{S}}} \left[  E_{\mbox{\scriptsize\boldmath$\theta$}_{\bar{\mathcal{S}}}} \left[ P(\mbox{\boldmath $y$}|\mbox{\boldmath $x$}_{(\mbox{\scriptsize\boldmath$w$},\mbox{\scriptsize\boldmath$r$})},P^{\mathcal{F}}_{\max}) P(\mbox{\boldmath $y$}|\mbox{\boldmath $x$}_{(\mbox{\scriptsize\boldmath$w$},\mbox{\scriptsize\boldmath$r$})},P^{\mathcal{F}}_{\min})^{-s_1} \right] e^{-Ns_1 \tau_{(\mbox{\tiny\boldmath$r$}, P^{\mathcal{F}}_{\min} ,\mathcal{S})}(\mbox{\scriptsize\boldmath$y$}) } \right],
\end{eqnarray}
and upper bound $P_{i[(\tilde{\mbox{\scriptsize\boldmath$r$}},\tilde{\mathcal{F}}),(\mbox{\scriptsize\boldmath$r$},\mathcal{F}),  \tilde{P}_{Y|\mbox{\tiny\boldmath$X$}}, \mathcal{S}]} $ for any $s_2 >0, 0 < \tilde{\rho} \le 1$ by,
\begin{eqnarray}
\label{CCBound4P_i}
P_{i[(\tilde{\mbox{\scriptsize\boldmath$r$}},\tilde{\mathcal{F}}),(\mbox{\scriptsize\boldmath$r$},\mathcal{F}), \tilde{P}_{Y|\mbox{\tiny\boldmath$X$}}, \mathcal{S}]}
&\le&  \sum_{\mbox{\scriptsize\boldmath$y$}} E_{\mbox{\scriptsize\boldmath$\theta$}_{\mathcal{S}}} \left[  E_{\mbox{\scriptsize\boldmath$\theta$}_{\bar{\mathcal{S}}}} \left[ P(\mbox{\boldmath$y$}|\mbox{\boldmath$x$}_{(\tilde{\mbox{\scriptsize\boldmath$w$}},\tilde{\mbox{\scriptsize\boldmath$r$}})},\tilde{P}_{Y|\mbox{\scriptsize\boldmath$X$}}) \right] \right. \nonumber\\
&& \left. \times E_{\mbox{\scriptsize\boldmath$\theta$}_{\bar{\mathcal{S}}}} \left\{ \left[ P(\mbox{\boldmath$y$}|\mbox{\boldmath$x$}_{(\mbox{\scriptsize\boldmath$w$},\mbox{\scriptsize\boldmath$r$})},P^{\mathcal{F}}_{\min})^{\frac{s_2}{\tilde{\rho}}} \right] \right\}^{\tilde{\rho}} e^{Ns_2 \tau_{(\mbox{\tiny\boldmath$r$}, P^{\mathcal{F}}_{\min},\mathcal{S})}(\mbox{\scriptsize\boldmath$y$}) } e^{N\tilde{\rho}\sum_{k\notin\mathcal{S}}r_k} \right],            \nonumber\\
&\le & \max_{(\mbox{\scriptsize\boldmath$r$}', \mathcal{F}') \notin \mathcal{R},\mbox{\scriptsize\boldmath$r$}'_{\mathcal{S}} = \mbox{\scriptsize\boldmath$r$}_{\mathcal{S}}} \sum_{\mbox{\scriptsize\boldmath$y$}} E_{\mbox{\scriptsize\boldmath$\theta$}_{\mathcal{S}}} \left[  E_{\mbox{\scriptsize\boldmath$\theta$}_{\bar{\mathcal{S}}}} \left[ P(\mbox{\boldmath$y$}|\mbox{\boldmath$x$}_{(\mbox{\scriptsize\boldmath$w$}',\mbox{\scriptsize\boldmath$r$}')},P^{\mathcal{F}'}_{\max}) \right] \right. \nonumber\\
&&\left. \times E_{\mbox{\scriptsize\boldmath$\theta$}_{\bar{\mathcal{S}}}} \left\{ \left[ P(\mbox{\boldmath$y$}|\mbox{\boldmath$x$}_{(\mbox{\scriptsize\boldmath$w$},\mbox{\scriptsize\boldmath$r$})},P^{\mathcal{F}}_{\min})^{\frac{s_2}{\tilde{\rho}}} \right] \right\}^{\tilde{\rho}} e^{Ns_2 \tau_{(\mbox{\tiny\boldmath$r$}, P^{\mathcal{F}}_{\min},\mathcal{S})}(\mbox{\scriptsize\boldmath$y$}) } e^{N\tilde{\rho}\sum_{k\notin\mathcal{S}}r_k} \right].
\end{eqnarray}
Note that the upper bound given in (\ref{CCBound4P_t}) is not a function of $P_{Y|\mbox{\scriptsize\boldmath$X$}}$. Similarly, the bound in (\ref{CCBound4P_i}) is not a function of $\tilde{P}_{Y|\mbox{\scriptsize\boldmath$X$}}$.

Optimization of the typicality threshold $\tau_{(\mbox{\tiny\boldmath$r$}, P^{\mathcal{F}}_{\min},\mathcal{S})}$ can be carried out using the similar technique as introduced in (\ref{TauEquality})-(\ref{TauExpression}) in Appendix \ref{AppendixTheoremMC}. By substituting the optimal $\tau_{(\mbox{\tiny\boldmath$r$}, P^{\mathcal{F}}_{\min},\mathcal{S})}$ into (\ref{CCBound4P_t}) and (\ref{CCBound4P_i}), we get
\begin{eqnarray}
\label{BoundPiPtCC}
P_{t[\mbox{\scriptsize\boldmath$r$},\mathcal{F},P_{Y|\mbox{\tiny\boldmath$X$}},\mathcal{S}]}, P_{i[(\tilde{\mbox{\scriptsize\boldmath$r$}},\tilde{\mathcal{F}}),(\mbox{\scriptsize\boldmath$r$},\mathcal{F}),\tilde{P}_{Y|\mbox{\tiny\boldmath$X$}},\mathcal{S}]} \le \max_{(\mbox{\scriptsize\boldmath$r$}', \mathcal{F}') \notin \mathcal{R},\mbox{\scriptsize\boldmath$r$}'_{\mathcal{S}} = \mbox{\scriptsize\boldmath$r$}_{\mathcal{S}}} \exp \left\{ -NE_i(\mathcal{S}, \mbox{\boldmath$r$},\mbox{\boldmath $r$}', \mathcal{F}, \mathcal{F}' ) \right\},
\end{eqnarray}
where $E_i(\mathcal{S}, \mbox{\boldmath$r$},\mbox{\boldmath $r$}', \mathcal{F}, \mathcal{F}' ) $ is given in (\ref{EmEiCC}).

Combining (\ref{CCBound4P_m}), (\ref{BoundPiPtCC}) and (\ref{ProofPesCC}), we obtain
\begin{eqnarray}
\label{CCPs1}
&& P_{es} \le \max \left\{ \max_{(\mbox{\scriptsize\boldmath$r$},P_{Y|\mbox{\tiny\boldmath$X$}}): P_{Y|\mbox{\tiny\boldmath$X$}} \in \mathcal{F}, (\mbox{\scriptsize\boldmath$r$},\mathcal{F}) \in \mathcal{R}} \sum_{\mathcal{S} \subset \{1,\cdots,K\}} \left[ \max_{(\mbox{\scriptsize\boldmath$r$}', \mathcal{F}') \notin \mathcal{R},\mbox{\scriptsize\boldmath$r$}'_{\mathcal{S}} = \mbox{\scriptsize\boldmath$r$}_{\mathcal{S}}} \exp \left\{ -NE_i(\mathcal{S}, \mbox{\boldmath$r$},\mbox{\boldmath $r$}', \mathcal{F}, \mathcal{F}' ) \right\} \right. \right.\nonumber\\
&& \qquad \left. \qquad\qquad\qquad\qquad\qquad\qquad + \sum_{(\tilde{\mbox{\scriptsize\boldmath$r$}},\tilde{\mathcal{F}})  \in \mathcal{R},\tilde{\mbox{\scriptsize\boldmath$r$}}_{\mathcal{S}} = \mbox{\scriptsize\boldmath$r$}_{\mathcal{S}}} \exp \left\{ -NE_m(\mathcal{S}, \mbox{\boldmath$r$},\tilde{\mbox{\boldmath $r$}}, \mathcal{F}, \tilde{\mathcal{F}} ) \right\} \right] , \nonumber\\
&& \max_{(\tilde{\mbox{\scriptsize\boldmath$r$}},\tilde{P}_{Y|\mbox{\tiny\boldmath$X$}} ): \tilde{P}_{Y|\mbox{\tiny\boldmath$X$}} \in \mathcal{F}, (\tilde{\mbox{\scriptsize\boldmath$r$}},\tilde{\mathcal{F}})  \notin \mathcal{R}} \sum_{\mathcal{S} \subset \{1,\cdots,K\}}
\left. \left[ \sum_{(\mbox{\scriptsize\boldmath$r$},\mathcal{F}) \in \mathcal{R},\mbox{\scriptsize\boldmath$r$}_{\mathcal{S}} = \tilde{\mbox{\scriptsize\boldmath$r$}}_{\mathcal{S}}} \max_{(\mbox{\scriptsize\boldmath$r$}', \mathcal{F}') \notin \mathcal{R},\mbox{\scriptsize\boldmath$r$}'_{\mathcal{S}} = \mbox{\scriptsize\boldmath$r$}_{\mathcal{S}}} \exp \left\{ -NE_i(\mathcal{S}, \mbox{\boldmath$r$},\mbox{\boldmath $r$}', \mathcal{F}, \mathcal{F}' ) \right\} \right] \right\}.
\end{eqnarray}
Since the upper bounds given in (\ref{CCBound4P_m}) and (\ref{BoundPiPtCC}) are not functions of individual channels (but functions of channel classes), the right hand side of (\ref{CCPs1}) can be simplified to the right hand side of (\ref{PesCC}).



%




\end{document}